\documentclass{article}

\usepackage{microtype}
\usepackage{graphicx}
\usepackage{subcaption}
\usepackage{booktabs} 

\usepackage{hyperref}
\usepackage{multirow}

\usepackage[table]{xcolor}
\definecolor{MBlue}{HTML}{A4C2E6} 
\definecolor{MPink}{HTML}{E8C2C0}
\definecolor{MGray}{HTML}{F4F5F7}  
\newcommand{\bg}[2][0]{\cellcolor{MBlue!#1}#2}      
\newcommand{\bged}[2][0]{\cellcolor{MPink!#1}#2}    
\newcommand{\bgbase}[1]{\cellcolor{MGray!100}#1}

\newcommand{\sd}[2]{{#1}_{\pm#2}}
\newcommand{\sdb}[2]{\mathbf{#1}_{\pm#2}}
\newcommand{\capbox}[2]{\protect\colorbox{#1}{\textcolor{black}{\small#2}}}

\usepackage[accepted]{icml2026}

\usepackage{amsmath}
\usepackage{amssymb}
\usepackage{mathtools}
\usepackage{amsthm}

\usepackage[breakable,skins]{tcolorbox}

\tcbset{
  chatbase/.style={
    enhanced,
    breakable,
    boxrule=0pt,
    frame hidden,
    sharp corners,
    left=6pt,
    right=6pt,
    top=4pt,
    bottom=4pt,
    before skip=6pt,
    after skip=6pt,
    colback=gray!4,
    borderline west={1.2pt}{0pt}{gray!60},
  }
}

\newtcolorbox{assistantchat}[1][]{%
  chatbase,
  colback=green!3,
  borderline west={1.2pt}{0pt}{green!50!black},
  #1
}

\newtcolorbox{userchat}[1][]{%
  enhanced,
  breakable,
  boxrule=0pt,
  frame hidden,
  sharp corners,
  left=6pt,
  right=6pt,
  top=4pt,
  bottom=4pt,
  before skip=6pt,
  after skip=6pt,
  colback=gray!4,
  borderline west={1.2pt}{0pt}{gray!60},
  #1
}

\usepackage[capitalize,noabbrev]{cleveref}

\theoremstyle{plain}
\newtheorem{theorem}{Theorem}[section]
\newtheorem{proposition}[theorem]{Proposition}
\newtheorem{lemma}[theorem]{Lemma}

\theoremstyle{definition}

\newtheorem{assumption}[theorem]{Assumption}
\theoremstyle{remark}

\usepackage[textsize=tiny]{todonotes}

\icmltitlerunning{Epistemic Gain, Aleatoric Cost: Uncertainty Decomposition in Multi-Agent Debate for Math Reasoning}

\begin{document}

\twocolumn[
  \icmltitle{Epistemic Gain, Aleatoric Cost: Uncertainty Decomposition \\ in Multi-Agent Debate for Math Reasoning}



  \icmlsetsymbol{equal}{*}
\icmlsetsymbol{intern}{\textdagger}
\icmlsetsymbol{corr}{\textdaggerdbl}
\icmlsetsymbol{lead}{\textsection}

  \begin{icmlauthorlist}
    \icmlauthor{Dan Qiao}{sch1,sch2,intern}
    \icmlauthor{Binbin Chen}{sch2,lead}
    \icmlauthor{Fengyu Cai}{sch3}
    \icmlauthor{Jianlong Chen}{sch1}
    \icmlauthor{Wenhao Li}{sch4}
    \icmlauthor{Fuxin Jiang}{sch2}
    \icmlauthor{Zuzhi Chen}{sch2}
    \icmlauthor{Hongyuan Zha}{sch1}
    \icmlauthor{Tieying Zhang}{sch2,corr}
    \icmlauthor{Baoxiang Wang}{sch1,vector,corr}
  \end{icmlauthorlist}

\icmlaffiliation{sch1}{School of Data Science, The Chinese University of Hong Kong, Shenzhen, China}
\icmlaffiliation{sch2}{ByteDance, Beijing, China}
\icmlaffiliation{sch3}{Technical University of Darmstadt, Darmstadt, Germany}
\icmlaffiliation{sch4}{School of Computer Science, Tongji University, Shanghai, China}
\icmlaffiliation{vector}{Vector Institute}

\icmlcorrespondingauthor{Baoxiang Wang}{bxwang@cuhk.edu.cn}
\icmlcorrespondingauthor{Tieying Zhang}{zhangtieying@bytedance.com}

  \icmlkeywords{Machine Learning, ICML}

  \vskip 0.3in
]



\printAffiliationsAndNotice{%
\textsuperscript{\textdagger}Work done as an intern at ByteBrain, ByteDance. (Email: \texttt{danqiao@link.cuhk.edu.cn}). %
\textsuperscript{\textsection}Project Lead.
}

\begin{abstract}
  Multi-Agent Debate (MAD) has shown promise in improving reasoning and reducing hallucinations, yet it remains unclear how information exchange shapes individual reasoning behavior. Empirically, MAD exhibits paradoxical phenomena, including rising accuracy with increasing token entropy and marked differences between homogeneous and heterogeneous agent combinations. In this paper, we introduce a Bayesian uncertainty analysis framework for MAD, which decomposes answer-level predictive uncertainty into epistemic uncertainty and aleatoric uncertainty, corresponding to the potential gain and cost of debate. Across multiple agent configurations, we find that effective debate depends on achieving high epistemic gain under controlled aleatoric cost. Building on this insight, we design an uncertainty-guided multi-agent reinforcement learning algorithm that encourages lower aleatoric cost and more effective epistemic information utilization. Experiments show that our approach simultaneously enhances each agent's accuracy and promotes a more productive debate process, providing an operational Bayesian perspective for understanding and improving MAD.
\end{abstract}

\section{Introduction}

Recent advances in Large Language Models (LLMs) have demonstrated remarkable general-purpose capabilities, yet these models remain susceptible to hallucinations and brittle reasoning in complex scenarios such as math problems. Inspired by human argumentation, Multi-Agent Debate (MAD) has emerged as a prominent inference-time paradigm to achieve more reliable responses by integrating perspectives from multiple source LLMs \cite{du2023improving}. Specifically, MAD constructs input for each round by orchestrating task roles, prompt strategies, or communication topologies across multiple LLMs, and aggregates the final responses through majority voting or the LLM-as-a-Judge approach. Previous studies have shown that iterative debate can elicit a "wisdom of crowds" effect \cite{minsky1986society}, improving final-answer performance and often being attributed to verification and self-correction behaviors.

However, the underlying mechanism driving these performance gains remains largely opaque. While aggregated outcomes show improvement, the validity of the iterative debate process is increasingly being scrutinized. \citet{choi2025debate} remove the majority voting process and find that the mean accuracy of debate among homogeneous agents often stagnates or even degrades across debate rounds, which implies that the effectiveness of MAD mainly stems from the aggregation process rather than belief improvement. Furthermore, \citet{wynn2025talk} have highlighted a concerning frequency of correct answers flipping to incorrect ones during debate, suggesting that LLMs may be driven more by social conformity or sycophantic behavior \cite{sharma2024sycophancy} than by logical deduction. \citet{yao2025peacemaker} have also pointed out that the inherent sycophancy of LLMs can hinder the development of positive disagreements into positive consensus during debates, which can prevent constructive disagreements from being effectively turned into improved reasoning. These findings indicate that the debate process may not be effective in reliable belief improvement and effective peer-information utilization, highlighting a central limitation of current MAD \cite{wang2024rethinking}.

In this paper, to understand how information exchange among agents fundamentally shapes reasoning behavior in MAD, we focus on uncertainty quantification \cite{lakshminarayanan2017simple} of model predictive responses across debate turns and input contexts in math reasoning problems. As shown in Fig.~\ref{fig:intro-entropy}, accuracy rapidly increases and saturates as the debate progresses, accompanied by a noticeable increase in token entropy. This counterintuitive pattern suggests that debate induces non-trivial changes in model uncertainty. While token entropy provides an initial symptom of debate-induced instability, it does not distinguish whether uncertainty arises from disagreement across agents or instability within each agent's own responses. Motivated by this, we study MAD through a Bayesian uncertainty perspective and instantiate it with answer-level predictive distributions, where epistemic uncertainty captures cross-agent disagreement and the potential for information gain, while aleatoric uncertainty captures within-agent response instability introduced during reasoning.

In particular, we remove the final aggregation stage of MAD, including majority voting \cite{du2023improving} and weighted voting \cite{chen2024reconcile}, to focus purely on the belief evolution of individual agents. This design allows us to separate improvements from answer aggregation from changes in individual agents' post-debate beliefs. It further enables us to move beyond simple accuracy metrics and systematically characterize the dynamics of information flow, particularly within heterogeneous agent setups, by analyzing answer-level frequency distributions and entropy throughout the debate trajectory. Our empirical results across homogeneous and heterogeneous MAD suggest that the effectiveness of a debate process is often influenced by two factors: sufficient initial inter-agent disagreement to create information-exchange potential and effective control of aleatoric uncertainty during the debate process. The former represents the potential for cognitive conflict and epistemic gain, while the latter reflects the instability introduced when agents incorporate external responses into their own reasoning. These observations suggest that inference-time prompting alone may be insufficient to reliably regulate the uncertainty dynamics of debate. To address this challenge, we present a multi-agent reinforcement learning (MARL) method that uses uncertainty-guided reward signals to improve external information utilization and reduce reasoning instability. 

Our main contributions are summarized as follows:
\begin{itemize}
\item We propose a Bayesian uncertainty analysis framework for MAD, which decomposes answer-level predictive uncertainty into epistemic and aleatoric components through a computable Jensen-Shannon-based decomposition, providing an operational perspective for diagnosing debate dynamics.
\item Through extensive evaluation of homogeneous and heterogeneous MAD across 7 base models, we characterize the uncertainty trade-off that constrains inference-time debate, showing that heterogeneous agents improve debate when their epistemic potential is realized under controlled aleatoric cost.
\item We propose UMAD\footnote{Our source code is available at \url{https://github.com/qiaodan-cuhk/UMAD}.}, an uncertainty-guided MARL algorithm that combines advantage shaping with epistemic influence rewards to fine-tune agents. Experiments show that UMAD improves post-debate accuracy and peer information utilization, which achieves stronger post-debate performance than independently trained single-agent GRPO baselines. These results suggest that uncertainty-guided training can improve both reasoning ability and debating effectiveness, rather than just improving aggregation accuracy.
\end{itemize}

\begin{figure}[t]
    \centering
    \includegraphics[width=0.9\linewidth]{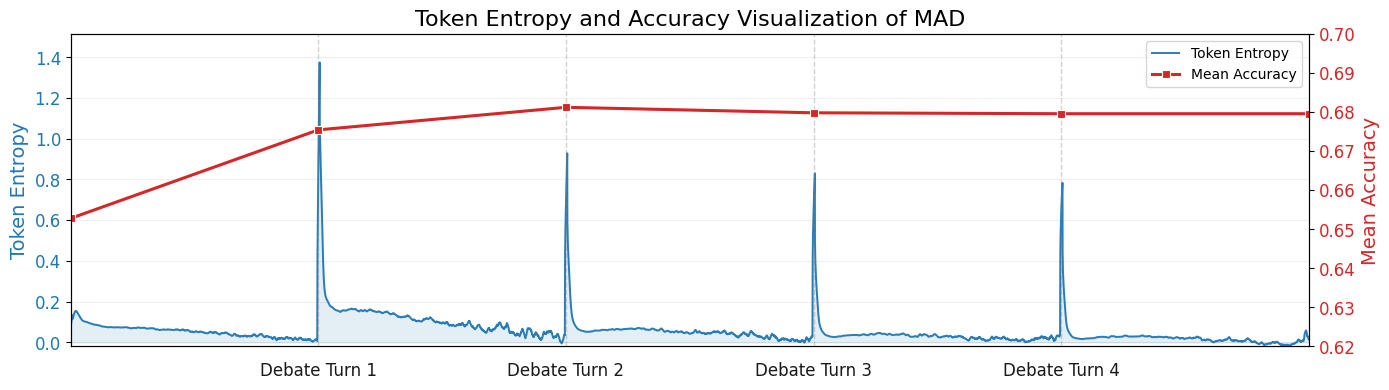}
    \caption{Multi agent debate performance with two \texttt{Qwen2.5-3B-Instruct} models. As accuracy (red line) increases across debate rounds, token entropy (blue shaded area) also increases noticeably.}
    \label{fig:intro-entropy}
\end{figure}

\section{Problem Formulation}

\subsection{Multi-Agent Debate}

Following standard MAD settings \cite{du2023improving, park2025maporl}, we formulate MAD as a $T$-steps iterative generation process among $N$ agents, where the agents are parameterized by LLM parameters $\{\pi_{\theta_i}\}_{i=1}^N$. For a given problem $x\sim\mathcal{D}$ sampled from the dataset, there exists a ground truth answer as $y^*$. For math tasks, the reasoning process can be diverse, but the answer is usually numerically unique and can often be precisely verified through string matching against ground truth answer.\footnote{With a slight abuse of notation, we use $y$ to represent both the full responses and the extracted final answers throughout the paper.}  At each round $t \in \{0, \dots, T\}$, agent $i$ generates a response $y_{i,t}$ conditioned on the original problem $x$ and a changing context $c_{i,t}$ as $y_{i,t}\sim\pi_{\theta_i}(\cdot \mid x, c_{i,t})$, where the context includes the reference solutions from other agents and additional debate instructions. In the initial round $t=0$, the models usually answer the original question $x$ directly, where the auxiliary context $c_{i,0}$ is empty. Formally, this response $y_{i,t}$ with $L$ tokens is generated autoregressively by agent $i$ as follows:
\begin{equation*}
   \pi_{\theta_i}(y_{i,t} \mid x, c_{i,t}) =  \prod_{l=1}^{L}\pi_{\theta_i}(y_l\mid x, c_{i,t}, y_{<l}).
\end{equation*}
The response usually contains intermediate reasoning steps before generating the final answer, which may be enclosed with special symbols \texttt{</think>} in the reasoning models such as OpenAI o1 and DeepSeek-R1 \cite{openai_o1, guo2025deepseek}. Although there are various information schemes for MAD prompts \cite{li2024improving, liu2024groupdebate, chen2024reconcile}, we define a uniform context mapping function $\Phi$ that constructs the input context for each agent based on the global history of peer responses from the previous rounds as $c_{i,t} = \Phi_i(\{Y_k\}_{k=1}^{t-1}, \mathcal{G}, \mathcal{P})$, where $Y_k = \{y_{1, k}, \dots, y_{N, k}\}$ represents all agents' responses in debate turn $k$, communication topology $\mathcal{G}$ determines the range of peer agents responses that agent $i$ can access (e.g., fully connected graphs and directed graphs). Agent role prompts $\mathcal{P}$ determine how each agent utilizes and processes peer agents responses, such as concatenation, introducing additional LLM for summarization, and specific role instructions.  At the final turn $T$, the system prediction results $\hat{y} = \Psi(Y_T)$ will be distilled from all agents' responses $Y_T$ by an aggregation function $\Psi$, such as majority voting \cite{du2023improving}, Elo \cite{zhang2025debate4math}, and LLM-as-a-Judge \cite{zhu2023judgelm}.

In this work, we focus on individual agents' reasoning capabilities rather than optimizing the MAD workflows for final responses accuracy. We therefore adopt a fully connected communication graph $\mathcal{G}$, where each agent observes all peer responses from the previous round, and do not introduce specialized roles, summarization modules, or task-specific communication rules. We also remove the final aggregation function $\Psi$ in our experiments and analysis, so that evaluation directly reflects the reasoning ability of each agent's post-debate response $y_{i,T}$. This setting isolates the effect of information exchange on individual reasoning behavior.

\subsection{Reinforcement Learning with Verifiable Rewards}
Reinforcement Learning with Verifiable Rewards (RLVR) has emerged as a powerful paradigm, which leverages rule-based verifiable binary sparse reward signals to stimulate the high-order complex reasoning capabilities of LLMs by RL training. Recent advances builds upon Group Relative Policy Optimization (GRPO) \cite{shao2024deepseekmath}, a specialized RL algorithm for reasoning tasks that eliminates the need for training a critic model. In the standard single-agent setting, for each query $x$, the LLM policy $\pi_\theta$ samples a group of $G$ responses $\{y_1, \dots, y_G\}$. For each response $y_i$, a reward function $R(y_i, y^*)$ evaluates the final answer and provide the binary rewards, which is $R=1$ if the answer correct and $R=0$ otherwise. 

The GRPO objective is defined as:
{\small
\begin{align}
\nonumber & \mathcal{J}(\theta)
= \mathbb{E}_{x \sim \mathcal{D}, \{y_i\}_{i=1}^G \sim \pi_\theta(\cdot \mid x)}
\Bigl[
\frac{1}{G}\sum_{i=1}^G \frac{1}{\lvert y_i\rvert}\sum_{j=1}^{\lvert y_i\rvert}
\Bigl(\min\Bigl(
r_{i,j}(\theta)A_{i,j}, \\
&\qquad \operatorname{clip}(r_{i,j},1-\epsilon,1+\epsilon)A_{i,j}
\Bigr)
-\beta D_{\mathrm{KL}}(\pi_\theta \,\|\, \pi_{\mathrm{ref}})
\Bigr)
\Bigr],
\end{align}
}
where $r_{i,j} = \frac{\pi_\theta(y_{i,j}|x)}{\pi_{\theta_{old}}(y_{i,j}|x)}$ represents the token-level ratio, $A_{i,j} = \frac{R_{i} - \text{mean}(\{R\}_{i=1}^G)}{\text{std}(\{R\}_{i=1}^G)}$ represents the group-normalized advantage, and $\beta$ controls the KL divergence penalty. This critic-free method can effectively stabilize the training of LLM inference and is widely used in math, code, and agentic training tasks \cite{yu2025dapo,wei2025swe,jin2025search}. In our work, we extend GRPO to multi-agent debate by treating MAD as a learnable multi-agent, multi-step decision process and optimizing each policy $\pi_{\theta_i}$ with round-level verifiable rewards, as described in Section \ref{subsec:MAD_POMDP}.

\section{Uncertainty Mechanism in MAD}
\label{sec:theory}

In this section, we analyze the dynamics of multi-agent debate (MAD) through an uncertainty decomposition lens. We treat the Bayesian formulation as a conceptual framework and explicitly distinguish latent belief variables, answer-level diagnostic proxies, and training-time optimization proxies. Section \ref{subsec: iterative_bayes} formalizes debate as an iterative Bayesian belief update process in which peer responses act as noisy evidence. Section \ref{subsec: system_uncertainty} instantiates this view with answer-level predictive distributions and decomposes system-level uncertainty into epistemic and aleatoric components. Section \ref{subsec:agent_uncertainty} further studies individual-agent uncertainty under changing debate contexts and explains when heterogeneous peers can provide additional epistemic gain.

\subsection{Iterative Bayesian Belief Update of MAD}\label{subsec: iterative_bayes} 
Motivated by recent research about LLMs as implicit Bayesian inference process \cite{jayasekera2025variational, xie2021explanation, choi2025debate}, we formalize agent $i$'s response generation as a hierarchical sampling process to  isolate the latent beliefs $\varphi_{i,t}$ from the observed text responses $y_{i,t}$. The latent belief variable $\varphi_{i,t}$ denotes the agent's implicit distribution over solution hypotheses, including relevant theorems, intermediate reasoning plans, and candidate final answers given the debate context $c_{i,t}$\footnote{For notational simplicity, the turn index $t$ in $\varphi_{i,t}$ is omitted hereafter when clear from the context.}. This can reflect the agent's current understanding of the problem, such as mathematical theorems, analytical methods, and problem-solving logic.
The observed text response $y_{i,t}$ is then obtained by autoregressive sampling in the vocabulary space by the LLM, which can be considered as being driven by the latent belief $\varphi_{i}$  as  $y_{i,t} \sim p(y \mid \varphi_{i})$. Since the LLM parameters $\theta_i$ is frozen in the inference time of MAD, the posterior predictive distribution over responses is obtained by marginalizing out the latent belief as
\begin{equation*}
 p(y_{i,t} \mid x,c_{i,t}) = \int p(y_{i,t} \mid \varphi_{i})\,p(\varphi_{i} \mid x,c_{i,t})\,d\varphi_{i}.
\end{equation*}
Therefore, the process of updating responses through multiple rounds of debate in MAD can be conceptually understood as an iterative Bayesian belief update process. By adding additional reference solutions to change the context $c_{i,t}$ between rounds, the agent's latent belief $p(\varphi_{i}\mid x,c_{i,t})$ is implicitly updated, which in turn changes the posterior prediction distribution $p(y_{i,t}\mid x,c_{i,t})$ and may change the final answer. For example, compared with the first turn initial responses $y_{i,0}$, the updated belief at turn $t$ can be modeled as the Bayesian update over the latent variable as 
\begin{equation}\label{eq:bayes_belief}
 p(\varphi_{i} \mid x,c_{i,t})
\ \propto\
p(c_{i,t}\mid \varphi_{i},x)\;p(\varphi_{i} \mid x),
\end{equation}
where $p(c_{i,t}\mid \varphi_{i},x)$ represents the implicit likelihood of the peer agents' reference solutions given the agent's latent belief state under context $c_{i,t}$. Intuitively, the effective debate contexts can shift the agent's posterior belief towards the ground truth and thus improve the predictive accuracy with better posterior $p(y_{i,t}\mid x,c_{i,t})$.

Unlike multiple-choice questions with limited options, math reasoning problems may often have an infinite space of answers. Considering that the ground-truth answer is usually unique up to an equivalence relation, we can simplify the belief space with binary correctness event for agent $i$ at turn $t$ as $h_{i,t} := \mathbf{1}[y_{i,t}\in \mathcal{Y}^*]\in\{0,1\}$, where $\mathcal{Y}^*\subseteq\mathcal{Y}$ denotes the set of responses equivalent to the ground-truth answer.
Accordingly, the predictive belief of correctness is
\begin{equation*}
p(h_{i,t}=1\mid x,c_{i,t})
=
\sum_{y\in \mathcal{Y}^*} p(y_{i,t}=y\mid x,c_{i,t}).
\end{equation*}
For notational simplicity, we omit the agent index and write $c_t$ for the debate context received by a fixed agent. Then we have the following lemma regarding the correctness of agent $i$'s responses at turn $t$ as: 
\begin{lemma}[Noisy-evidence Log-odds Update]\label{lem:log_odds}
Given an input question $x$, assuming that the initial context is empty and debate context $c_t$ is valid as $p(c_t\mid x)>0$, the log-odds of the binary hypothesis $h\in\{0,1\}$ after debate can be written as
{\small
\begin{equation*}
    \log \frac{p(h=1 \mid x,c_t)}{p(h=0 \mid x,c_t)}
=
\log \frac{p(h=1 \mid x)}{p(h=0 \mid x)}
+
\log \frac{p(c_t \mid h=1, x)}{p(c_t \mid h=0, x)}.
\end{equation*}
}
\end{lemma}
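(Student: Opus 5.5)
This is Bayes' rule applied to the binary hypothesis $h$, conditioned throughout on $x$, followed by taking the ratio of the two posteriors. The plan is to write the joint $p(h,c_t\mid x)$ in two ways, divide the $h=1$ and $h=0$ instances so that the normalizer $p(c_t\mid x)$ cancels, and then take logarithms.

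First, since $p(c_t\mid x)>0$ by the validity assumption, for each $h\in\{0,1\}$ the conditional $p(h\mid x,c_t)$ is well defined. Bayes' rule conditioned on $x$ gives
\begin{equation*}
p(h\mid x,c_t)=\frac{p(c_t\mid h,x)\,p(h\mid x)}{p(c_t\mid x)},\qquad p(c_t\mid x)=\sum_{h'\in\{0,1\}}p(c_t\mid h',x)\,p(h'\mid x).
\end{equation*}
This mirrors the belief update in \eqref{eq:bayes_belief}, with the binary correctness event $h$ playing the role of the latent variable. One can regard $h$ as a deterministic coarsening of $\varphi$ through the predictive map, though only the joint law of $(h,c_t)$ given $x$ is needed. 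The assumption that the initial context is empty is what identifies $p(h\mid x)$ with the turn-$0$ predictive belief $p(h_{i,0}=1\mid x,c_{i,0})$, so the first term on the right-hand side is indeed the pre-debate log-odds.

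Next, I form the ratio of the $h=1$ and $h=0$ expressions. The common denominator $p(c_t\mid x)$ cancels, leaving
\begin{equation*}
\frac{p(h=1\mid x,c_t)}{p(h=0\mid x,c_t)}=\frac{p(h=1\mid x)}{p(h=0\mid x)}\cdot\frac{p(c_t\mid h=1,x)}{p(c_t\mid h=0,x)}.
\end{equation*}
Taking logarithms turns the product into the stated sum.

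The only real obstacle is well-definedness rather than algebra. The log-odds and the likelihood ratio require $p(h=0\mid x)$, $p(h=1\mid x)$, $p(c_t\mid h=0,x)$ and $p(c_t\mid h=1,x)$ to be strictly positive. I would state this explicitly as an implicit nondegeneracy condition, or else adopt the convention $\log(a/0)=+\infty$ together with the observation that both sides degenerate consistently. I would also note that $p(c_t\mid x)>0$ guarantees that at least one of the products $p(c_t\mid h,x)\,p(h\mid x)$ is positive, so the posterior is a genuine distribution. The second term is the log-likelihood ratio of the peer evidence, which the paper interprets as the noisy-evidence contribution.
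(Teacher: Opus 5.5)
Your proof is correct and follows the same route as the paper: write Bayes' rule for $h=1$ and $h=0$, divide so that the marginal likelihood $p(c_t\mid x)$ cancels, and take logarithms. The paper's appendix actually proves a slightly more general sequential form (an existing context $c$ plus a new message $m$), which reduces to this statement with $c=\emptyset$ and $m=c_t$. It also leaves implicit the positivity conditions you make explicit.
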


The proof of Lemma~\ref{lem:log_odds} follows directly from Bayes' rule and is provided in Appendix~\ref{app:proof_lemma1}.

The additive term $\log \frac{p(c_t \mid h=1, x)}{p(c_t \mid h=0, x)}$ is the \textit{log Bayes factor} induced by the debate context. A positive value indicates that the context provides evidence in favor of correctness and may induce a wrong-to-correct ($W\to C$) transition. A negative value indicates misleading or poorly utilized evidence, which can shift the posterior belief away from the ground truth and induce a correct-to-wrong ($C\to W$) transition. It provides a Bayesian interpretation of debate stagnation and answer flipping observed in prior MAD studies \cite{choi2025debate, wynn2025talk, yao2025peacemaker}. This observation highlights that MAD does not guarantee monotonic utilization of correct peer cues: even when peer solutions contain useful information, an agent's internal belief can still be shifted toward an incorrect response.

\begin{figure*}[t]
    \centering
    \begin{subfigure}{0.32\linewidth}
        \centering
        \includegraphics[width=\linewidth]{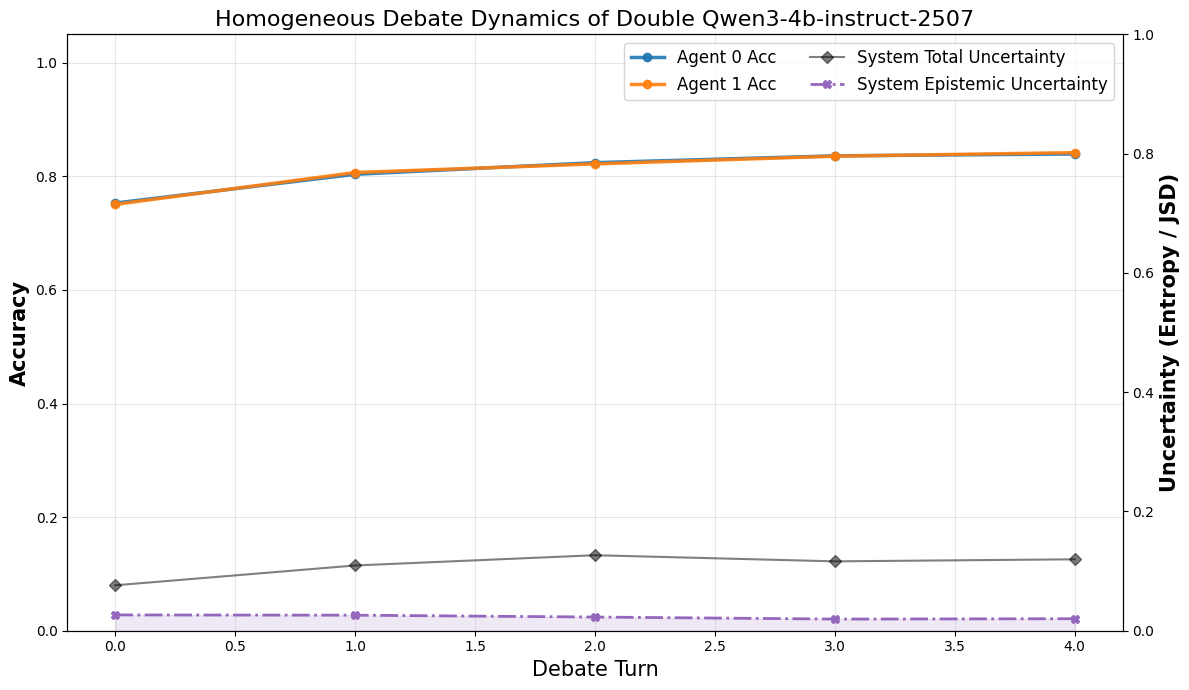}
        \caption{Success Homogeneous MAD}
    \end{subfigure}
    \begin{subfigure}{0.32\linewidth}
        \centering
        \includegraphics[width=\linewidth]{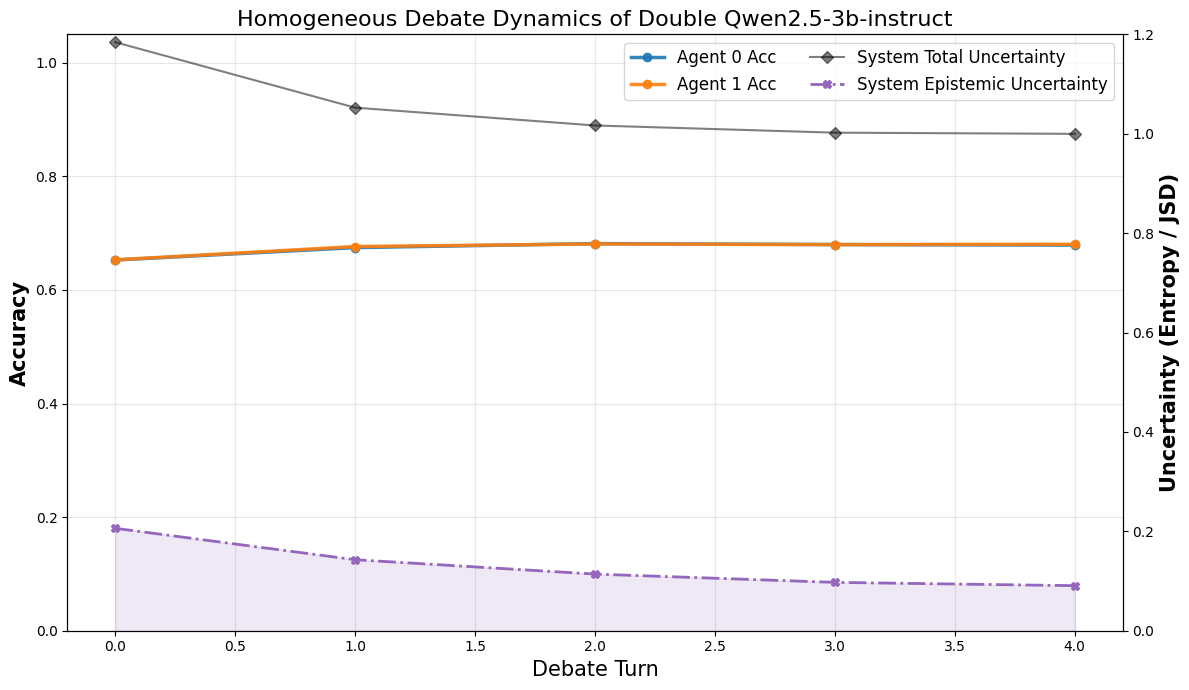}
        \caption{Neutral Homogeneous MAD}
    \end{subfigure}
    \begin{subfigure}{0.32\linewidth}
        \centering
        \includegraphics[width=\linewidth]{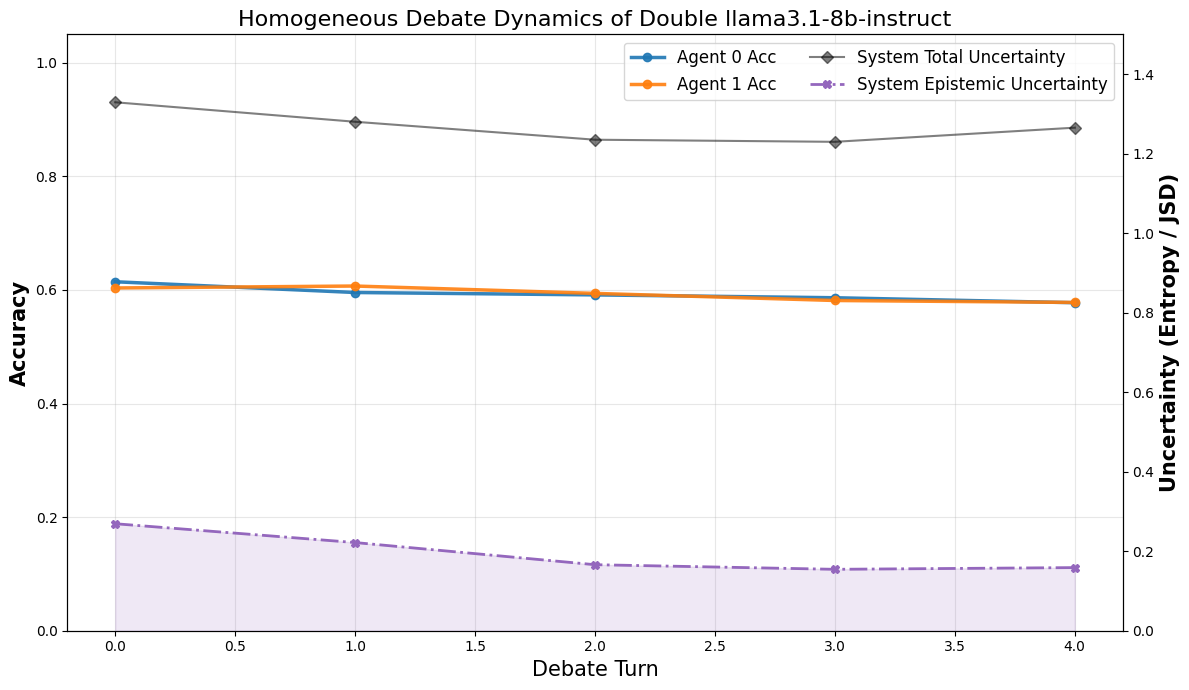}
        \caption{Fail Homogeneous MAD}
    \end{subfigure}

    \vspace{0.5em}

    \begin{subfigure}{0.32\linewidth}
        \centering
        \includegraphics[width=\linewidth]{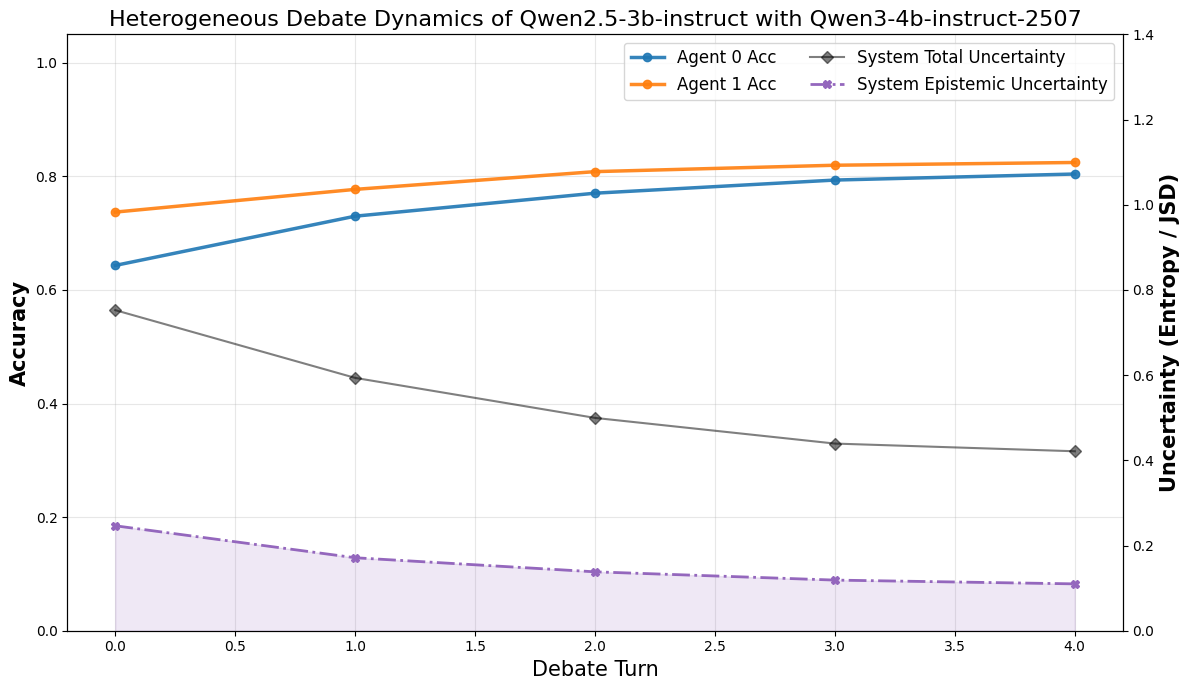}
        \caption{Success Heterogeneous MAD}
    \end{subfigure}
    \begin{subfigure}{0.32\linewidth}
        \centering
        \includegraphics[width=\linewidth]{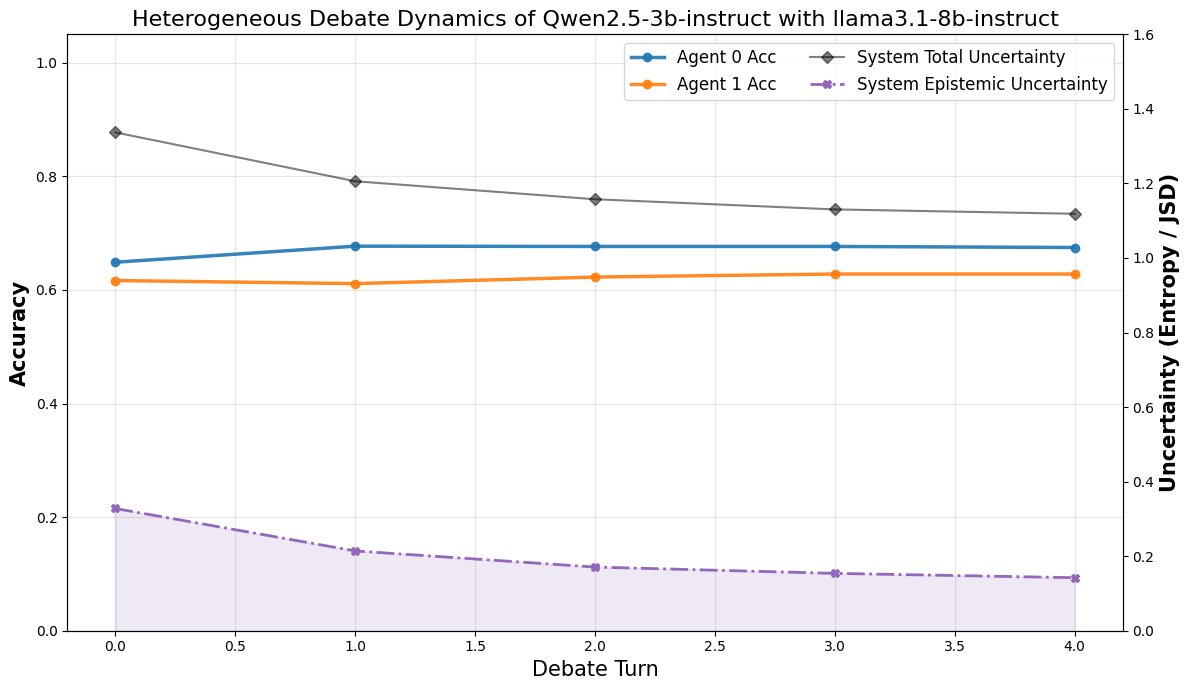}
        \caption{Neutral Heterogeneous MAD}
    \end{subfigure}
    \begin{subfigure}{0.32\linewidth}
        \centering
        \includegraphics[width=\linewidth]{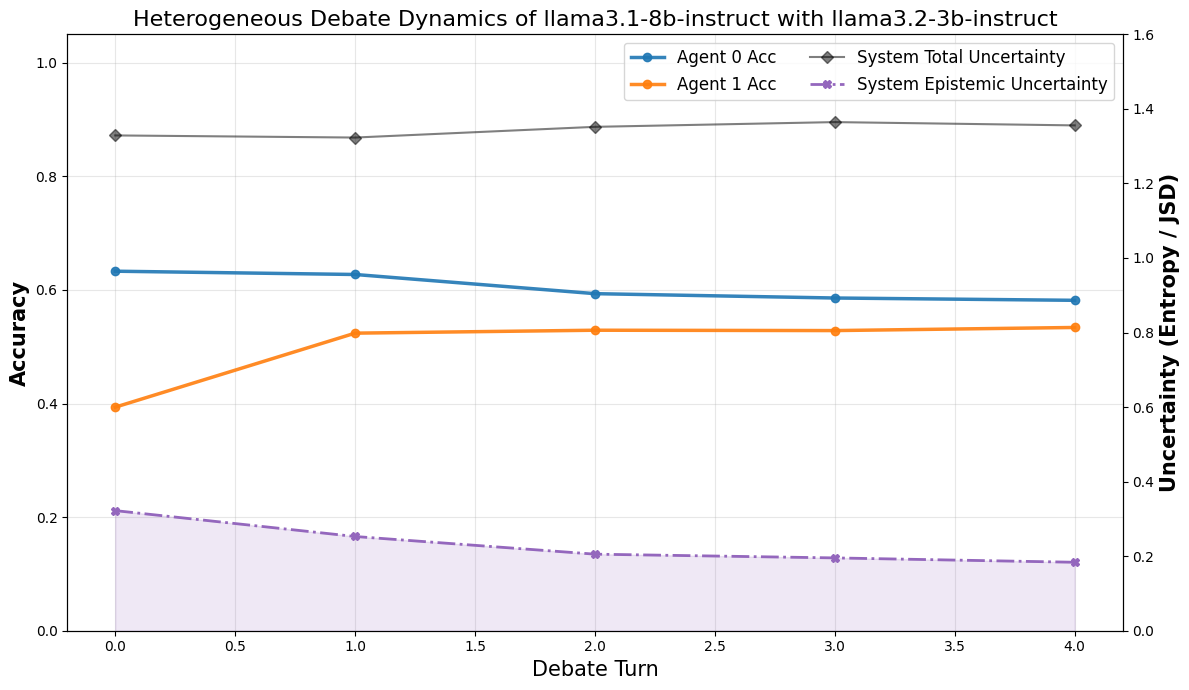}
        \caption{Fail Heterogeneous MAD}
    \end{subfigure}

    \caption{Overview of uncertainty decomposition in MAD. The purple dashed area represents the system epistemic uncertainty and the gray line represents the total uncertainty. The gap between total uncertainty and Sys-EU corresponds to Sys-AU. The blue and orange lines represent the accuracy on the test datasets. 
    \textbf{Top row:} Homogeneous multi-agent debate.
    \textbf{Bottom row:} Heterogeneous multi-agent debate.}
    \label{fig:intro-total}
\end{figure*}

\subsection{System-Level Uncertainty Trade-off in MAD}\label{subsec: system_uncertainty}

Quantifying the uncertainty of large models is crucial for understanding the reliability of the responses. However, due to the large scale of LLM parameters and high-dimensional generation space, traditional uncertainty estimation methods based on deep Bayesian networks or dropout inference \cite{lakshminarayanan2017simple, li2017dropout} are impractical for LLM-based reasoning. Moreover, the implicit latent belief state $\varphi_{i}$ is also computationally intractable for large LLMs \cite{abbasi2024believe,jayasekera2025variational}. To address the above limitations, we consider quantifying and decomposing the uncertainty of the MAD system based on the answer-level predictive distributions.

The core intuition is that the MAD system can be viewed as an ensemble of predictive experts. Given the same question $x$ and debate contexts $C_t=\{c_{i,t}\}_{i=1}^N$, each agent induces an answer-level predictive distribution $p_{i,t}(y)$. In practice, we estimate $p_{i,t}(y)$ by sampling $G$ responses from agent $i$ under the same $(x,c_{i,t})$ and extracting their final answers. This empirical distribution captures how stable or dispersed the agent's final-answer predictions are under the current debate context. Instead of treating these samples as exact posterior draws from a joint Bayesian model, we use an ensemble-of-experts proxy and define the system-level predictive distribution as $p_{\mathrm{sys},t}:=\frac{1}{N}\sum_{i=1}^N p_{i,t}$.

This ensemble proxy also aligns with Bayesian belief update under the mild assumption that each agent's distribution $p_i(y)$ can be viewed as an expert corresponding to implicit sampling, such that the cross-expert average approximates the Bayesian model average. Therefore, the mixture distribution $p_{\mathrm{sys},t}$ can be interpreted as an approximate posterior prediction distribution. We quantify total uncertainty (TU) as the entropy of the collective answer-level predictive distribution. This mixture entropy admits an answer-level analogue of the classical law of total entropy \cite{lakshminarayanan2017simple}, which can be decomposed into system epistemic uncertainty (Sys-EU) and system aleatoric uncertainty (Sys-AU) as follows.

\begin{proposition}[Answer-level System Uncertainty Decomposition]\label{prop:sys_decomp}
Let $\{p_{i,t}\}_{i=1}^N$ denote the agents' answer-level predictive distributions at turn $t$, and let
$p_{\mathrm{sys},t}=\frac{1}{N}\sum_{i=1}^N p_{i,t}$ be their mixture. Then the answer-level total uncertainty decomposes as
\begin{align}
\underbrace{H(p_{\mathrm{sys},t})}_{\mathrm{TU}(t)}= 
\underbrace{\mathrm{JSD}(p_{1,t},\ldots,p_{N,t})}_{\mathrm{Sys\mbox{-}EU}}
+
\underbrace{\frac{1}{N}\sum_{i=1}^N H(p_{i,t})}_{\mathrm{Sys\mbox{-}AU}}.
\label{eq:sys_decomp}
\end{align}
\end{proposition}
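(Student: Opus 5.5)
The identity is the equal-weight generalized Jensen--Shannon divergence written as a definition, so the plan is to fix that definition and verify it algebraically. I will take
\begin{equation*}
\mathrm{JSD}(p_{1,t},\ldots,p_{N,t}) := \frac{1}{N}\sum_{i=1}^N D_{\mathrm{KL}}\bigl(p_{i,t}\,\|\,p_{\mathrm{sys},t}\bigr),
\end{equation*}
the average divergence of each expert from the mixture, with uniform weights $1/N$. This is the standard definition, and it coincides with the mutual information between the answer $y$ and a uniformly drawn agent index. I will state it explicitly at the start of the proof, since the proposition is only meaningful once the JSD is pinned down. I will work over the countable answer space $\mathcal{Y}$ of extracted final answers and use the conventions $0\log 0=0$ and $0\log(0/q)=0$.

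\textbf{Key step.} Expand each KL term pointwise:
\begin{equation*}
D_{\mathrm{KL}}(p_{i,t}\|p_{\mathrm{sys},t}) = \sum_y p_{i,t}(y)\log p_{i,t}(y) - \sum_y p_{i,t}(y)\log p_{\mathrm{sys},t}(y).
\end{equation*}
The first sum equals $-H(p_{i,t})$. Now average over $i$ and exchange the finite sum over $i$ with the sum over $y$. Since $\frac{1}{N}\sum_i p_{i,t}(y)=p_{\mathrm{sys},t}(y)$, the second term becomes
\begin{equation*}
-\sum_y p_{\mathrm{sys},t}(y)\log p_{\mathrm{sys},t}(y) = H(p_{\mathrm{sys},t}).
\end{equation*}
Hence
\begin{equation*}
\mathrm{JSD} = H(p_{\mathrm{sys},t}) - \frac{1}{N}\sum_i H(p_{i,t}),
\end{equation*}
and rearranging gives \eqref{eq:sys_decomp}.

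\textbf{Well-definedness.} The only delicate point is showing that no term is $\infty-\infty$. Wherever $p_{i,t}(y)>0$ we have $p_{\mathrm{sys},t}(y)\ge p_{i,t}(y)/N>0$, so each KL term is finite on the support of $p_{i,t}$. In the paper's setting every $p_{i,t}$ is an empirical distribution over $G$ samples, so all supports are finite and every sum is finite; the rearrangement is then trivial. For general countable supports I would either assume finite entropies or note that the identity holds in $[0,\infty]$ because all three terms are nonnegative. I expect this support bookkeeping to be the only obstacle, and a mild one.

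Finally, as a remark, I will record the consequences that justify the labels. The Sys-EU term is nonnegative, by Gibbs' inequality or equivalently concavity of entropy. It vanishes iff all $p_{i,t}$ coincide, which matches its reading as cross-agent disagreement, while the Sys-AU term is the average within-agent dispersion.
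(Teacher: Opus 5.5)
Your proof is correct, but it runs in the opposite direction from the paper's. The paper defines the equal-weight generalized Jensen--Shannon divergence in entropy form, $\mathrm{JSD}(p_1,\ldots,p_N) := H\bigl(\frac{1}{N}\sum_i p_i\bigr) - \frac{1}{N}\sum_i H(p_i)$. The proposition is then that definition rearranged, and the proof is a single line. You instead take the average-KL form $\frac{1}{N}\sum_i D_{\mathrm{KL}}(p_{i,t}\|p_{\mathrm{sys},t})$ as primitive, which is the mutual information between the answer and a uniformly drawn agent index. You then derive the entropy form by collapsing the averaged cross-entropies via $\frac{1}{N}\sum_i p_{i,t}=p_{\mathrm{sys},t}$.

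The paper's route buys brevity but makes the statement definitional, so reading Sys-EU as inter-agent disagreement rests only on informal discussion. Your route gives the identity real content: it exhibits Sys-EU as the average divergence of each agent from the consensus mixture. That immediately yields nonnegativity, the equality case, and the bound $\mathrm{Sys\mbox{-}EU}\le\log N$ (from $p_{\mathrm{sys},t}\ge p_{i,t}/N$). Your definition also stays finite when entropies are infinite, whereas the paper's would read $\infty-\infty$.

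One small imprecision is in your side remark. Nonnegativity of the three terms is not by itself what makes the identity hold in $[0,\infty]$, because the pointwise summands of each KL term are signed. What you need is that each KL series converges absolutely with value at most $\log N$. That is what makes it legitimate to split $-\log p_{\mathrm{sys},t} = -\log p_{i,t} + \log(p_{i,t}/p_{\mathrm{sys},t})$ under the sum on the support of $p_{i,t}$. For the finitely supported empirical distributions the paper actually uses, this point is moot.
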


The proof of Proposition~\ref{prop:sys_decomp} follows from the definition of generalized Jensen-Shannon divergence and is provided in Appendix~\ref{app:proof_prop2}. This identity is exact for the answer-level mixture distribution, while its interpretation as epistemic and aleatoric uncertainty is operational. Specifically, Sys-EU measures disagreement among agents' answer distributions and therefore serves as a proxy for reducible inter-agent uncertainty, while Sys-AU measures the average entropy within individual agents' answer distributions and serves as a proxy for response instability. This decomposition does not assume that Sys-EU and Sys-AU are statistically independent across debate turns; rather, it separates two additive components of answer-level entropy. In deterministic math reasoning tasks, Sys-AU does not mainly arise from inherent label ambiguity, but from generation stochasticity, long-context perturbations, and noisy utilization of peer responses \cite{feng2025rethinking, ling2024uncertainty}.

To examine whether this operational decomposition captures debate dynamics, we evaluate both homogeneous and heterogeneous MAD configurations on a subset of the MATH~\cite{hendrycks2021measuring} dataset with $N=2$ agents, $T=5$ debate rounds, and $G=5$ rollouts for every prompt (see Appendix~\ref{app:sys_uncertainty_decomposition} for detailed setups). As presented in Fig.~\ref{fig:intro-total}, across all settings, $\mathrm{Sys\mbox{-}EU}$ generally decreases with debate turns, suggesting that MAD tends to reduce inter-agent disagreement and drive agents toward consensus, although this consensus is not necessarily correct.

Crucially, performance differences across configurations are primarily explained by the balance between reducing $\mathrm{Sys\mbox{-}EU}$ and controlling $\mathrm{Sys\mbox{-}AU}$. Successful debates keep $\mathrm{Sys\mbox{-}AU}$ low or decreasing, whereas failure cases exhibit persistent or increasing $\mathrm{Sys\mbox{-}AU}$ that offsets the benefit of disagreement reduction. This trade-off also clarifies when agent heterogeneity is advantageous. Heterogeneous pairs typically exhibit higher initial $\mathrm{Sys\mbox{-}EU}$ than homogeneous counterparts, indicating a larger cognitive gap and greater potential for information exchange. However, this epistemic potential translates into gains only when the receiving agent can reliably process peer arguments without introducing excessive generation instability. Our flip-ratio analysis in Appendix~\ref{subsec:flip_dynamics} supports this view: cross-family pairings can create large epistemic gaps but may also introduce additional instability, whereas within-family size or generation heterogeneity often offers a better balance. In practice, a stronger model with a low flip ratio can function as a stable anchor, while the weaker counterpart must retain sufficient instruction-following capacity to synthesize peer reasoning rather than oscillating or blindly conforming.

\subsection{Epistemic Advantage of Heterogeneous MAD}\label{subsec:agent_uncertainty}
In the previous section \ref{subsec: system_uncertainty}, the system-level uncertainty analysis shows that debate accuracy is jointly affected by epistemic gain and aleatoric cost, yet it is insufficient to explain when a specific peer response provides useful information to an individual agent. Therefore, we analyze the uncertainty change of a single agent when receiving homogeneous and heterogeneous peer messages. The key distinction is between mere disagreement and useful disagreement: high system-level disagreement indicates available epistemic potential, but an individual agent can benefit only when the peer message provides non-redundant evidence about its latent solution belief.

Following the Bayesian view in Section~\ref{subsec: iterative_bayes}, let $\varphi$ denote the agent's latent belief over solution hypotheses under the current problem $x$ and context $c$. We define the epistemic gain provided by an additional peer message $m$ as
\[
G_{\mathrm{epi}}(m \mid x,c) := I(\varphi; m \mid x,c).
\]
This quantity measures how much information the peer message provides about the agent's latent belief state. In MAD, high epistemic gain corresponds to evidence that introduces useful solution hypotheses, corrects misleading reasoning paths, or helps the agent distinguish between competing answers. Operationally, repeated samples from an agent approximate the answer support reachable under its current belief, similar to a \texttt{pass@K}-style view. A peer is useful when its sampled responses cover solution modes that the receiving agent does not already cover.
We next formalize why homogeneous debate often provides limited individual improvement. Homogeneous peers may generate different responses due to sampling randomness, but their answer supports and reasoning biases are often substantially overlapping. Thus, after observing one homogeneous message, another homogeneous message tends to provide only limited additional information.
\begin{assumption}[Homogeneous Information Saturation]\label{ass:homo_saturation}
Given the current problem $x$ and context $c$, let $m_{\mathrm{homo}}$ and $m'_{\mathrm{homo}}$ be two independent messages from homogeneous peers. After conditioning on one homogeneous message, the additional conditional information provided by another homogeneous message is bounded:
\[
I(\varphi; m'_{\mathrm{homo}} \mid x,c,m_{\mathrm{homo}})
\le
\epsilon_{\mathrm{homo}}.
\]
\end{assumption}

This assumption is not a universal claim about all agents, but a diagnostic condition for the common case in which same-model samples have diminishing marginal novelty. However, it captures the support-overlap intuition: homogeneous agents may disagree at the surface level, but they often explore similar regions of the solution space. As a result, homogeneous debate can reduce disagreement and move agents toward consensus, but they may not substantially expand recipients' support for effective solutions \cite{choi2025debate, zhang2025if, wynn2025talk}.

\begin{theorem}[Heterogeneous epistemic advantage under complementarity]\label{thm:hetero_gain}
Suppose Assumption~\ref{ass:homo_saturation} holds. Let $m_{\mathrm{hetero}}$ be a message from a heterogeneous peer. If the heterogeneous message provides complementary evidence beyond the homogeneous saturation level, i.e.,
\[
I(\varphi; m_{\mathrm{hetero}} \mid x,c,m_{\mathrm{homo}})
\ge
\epsilon_{\mathrm{homo}} + \Delta
\]
for some $\Delta \ge 0$, then replacing the second homogeneous message with the heterogeneous message increases the total epistemic information by at least $\Delta$:
\[
I(\varphi; m_{\mathrm{homo}},m_{\mathrm{hetero}} \mid x,c)
-
I(\varphi; m_{\mathrm{homo}},m'_{\mathrm{homo}} \mid x,c)
\ge
\Delta.
\]
\end{theorem}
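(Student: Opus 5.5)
The argument uses the chain rule for conditional mutual information, applied to both message pairs with the common first message $m_{\mathrm{homo}}$. Everything is conditioned on $(x,c)$ throughout. The chain rule gives
\begin{align*}
I(\varphi; m_{\mathrm{homo}},m_{\mathrm{hetero}} \mid x,c) &= I(\varphi; m_{\mathrm{homo}} \mid x,c) + I(\varphi; m_{\mathrm{hetero}} \mid x,c,m_{\mathrm{homo}}),\\
I(\varphi; m_{\mathrm{homo}},m'_{\mathrm{homo}} \mid x,c) &= I(\varphi; m_{\mathrm{homo}} \mid x,c) + I(\varphi; m'_{\mathrm{homo}} \mid x,c,m_{\mathrm{homo}}).
\end{align*}
This step only needs the joint law of $(\varphi, m_{\mathrm{homo}}, m'_{\mathrm{homo}}, m_{\mathrm{hetero}})$ given $(x,c)$ to be well defined, with all mutual informations finite. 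That holds automatically here because the messages live in the answer-level (discrete) space; otherwise we assume it.

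Subtracting the two identities, the shared first-order term $I(\varphi; m_{\mathrm{homo}} \mid x,c)$ cancels. The difference of total informations is therefore exactly
\[
I(\varphi; m_{\mathrm{hetero}} \mid x,c,m_{\mathrm{homo}}) - I(\varphi; m'_{\mathrm{homo}} \mid x,c,m_{\mathrm{homo}}).
\]

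We then bound the two terms separately. The complementarity hypothesis bounds the first term below by $\epsilon_{\mathrm{homo}}+\Delta$. Assumption~\ref{ass:homo_saturation} bounds the second term above by $\epsilon_{\mathrm{homo}}$. Combining the two, the difference is at least $\epsilon_{\mathrm{homo}}+\Delta-\epsilon_{\mathrm{homo}}=\Delta$, which is the claim.

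There is no substantive obstacle; the only care needed is bookkeeping. First, the same $m_{\mathrm{homo}}$ must appear in both pairs, so that the cancellation is legitimate; this is how the theorem phrases ``replacing the second homogeneous message''. Second, the conditioning sets in the hypothesis and in Assumption~\ref{ass:homo_saturation} must match exactly the second-step terms of the chain rule, namely $(x,c,m_{\mathrm{homo}})$. Third, finiteness is needed to rule out $\infty-\infty$ in the subtraction. I would state these explicitly as standing conventions at the start of the appendix proof.
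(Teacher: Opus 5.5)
Your proposal is correct and follows the paper's proof: apply the chain rule of conditional mutual information to both pairs sharing $m_{\mathrm{homo}}$, subtract so that $I(\varphi; m_{\mathrm{homo}} \mid x,c)$ cancels, then bound the two remaining conditional terms using the complementarity hypothesis and Assumption~\ref{ass:homo_saturation}. Your explicit finiteness condition, which rules out $\infty-\infty$ in the subtraction, is a small addition that the paper leaves implicit.
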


The proof of Theorem~\ref{thm:hetero_gain} follows from the chain rule of conditional mutual information and is provided in Appendix~\ref{app:proof_theo3}. It clarifies that the advantage of heterogeneous MAD comes from conditional novelty rather than model diversity itself. Heterogeneous agents can provide complementary evidence beyond the inference scope of homogeneous peers, expanding the receiver's effective solution support and creating greater wrong-to-correct potential. This also explains why homogeneous debate often stagnates: when the conditional novelty of another same-model message is small, disagreement reduction mainly reflects redundant consensus formation rather than genuine belief expansion \cite{wynn2025talk}.

However, the theorem characterizes epistemic potential rather than guaranteed accuracy improvement. As shown in Section~\ref{subsec: system_uncertainty}, this potential translates into performance gains only when the receiving agent can absorb complementary evidence under controlled aleatoric cost. This also explains the failure patterns in heterogeneous MAD: for example, the Llama-based failure cases in Fig.~\ref{fig:intro-total} still exhibit non-negligible disagreement reduction, but the induced response instability keeps $\mathrm{Sys\mbox{-}AU}$ high enough to offset the epistemic benefit. Therefore, the goal is not merely to maximize complementary epistemic gain $\Delta$, but to realize this gain while controlling aleatoric cost. This trade-off motivates our downstream \emph{training-time optimization proxies}, which encourage agents to absorb useful heterogeneous evidence while suppressing instability during debate.

\section{Uncertainty-Guided MARL for MAD}
\label{sec:method}

Based on the analysis in Section \ref{sec:theory}, effective MAD requires realizing complementary epistemic information from peers while controlling the aleatoric instability introduced during debate. Therefore, we model MAD as a MARL problem to improve how agents use peer information during debate. Our method, named \textbf{Uncertainty-Guided Multi-Agent Debate (UMAD)} Reinforcement Learning, optimizes the agent's policy using verifiable rewards of responses, encouraging agents to absorb useful peer evidence while suppressing unstable generations.

\subsection{MAD as a Dec-POMDP} \label{subsec:MAD_POMDP}

Motivated by recent works on cooperative training among multiple LLMs \cite{park2025maporl,liao2025marft, wan2025rema}, we formulate the multi-agent debate reinforcement learning as a decentralized partially observable Markov decision process (Dec-POMDP) \cite{oliehoek2016concise} defined by the tuple $\langle \mathcal{N}, \mathcal{S}, \mathcal{A}, P, R, \Omega, \mathcal{O}, \gamma \rangle$. 

Specifically, at each round $t$, each agent $i \in \{1, \dots, N\}$ observes a local context $o_{i,t}=x\oplus c_{i,t}$, consisting of the original query and the debate context constructed from previous peer responses. The response sequence generated by LLM $i$, denoted by $a_{i,t}$, is agent $i$'s action at step $t$ and contains intermediate reasoning steps and the final answer. To induce collaborative improvement, we employ a dense team reward $R_t = \sum_i r(a_{i,t}, y^*)$, where each individual reward is computed by verifying the extracted final answer. Considering the complexity and high computational requirements of training LLMs, we adopt an independent learning paradigm as the algorithm backbone, where each agent maximizes the expected joint return. However, standard independent learning often suffers from non-stationarity and coarse credit assignment in MAD. We address these using a two-pronged uncertainty-guided mechanism.

\subsection{Uncertainty-Guided Optimization Mechanisms} \label{sec:mechanisms}

Based on the uncertainty trade-off identified in Section \ref{sec:theory}, we introduce two tractable training-time proxies that encourage epistemic information utilization and reduce aleatoric instability during policy optimization.

\textbf{Aleatoric Uncertainty-Aware Advantage.} Agents in debate often exhibit miscalibrated confidence, manifesting as "stubborn hallucinations" or "hesitant reasoning." To regulate aleatoric noise, we employ a confidence-aware advantage shaping mechanism. Motivated by UCAS~\cite{xie2025unlocking}, we use token-level negative log-probability as a practical proxy for generation uncertainty. For each response $y_{i,t}$, we first compute the token mean negative log-probability as
\begin{equation*}
\hat{\mathcal{U}}_{i,t}
=
-\frac{1}{|y_{i,t}|}
\sum_{j=1}^{|y_{i,t}|}
\log \pi_{\theta_i}(y_{i,t,j}\mid x,c_{i,t},y_{i,t,<j}),
\end{equation*}
which can be standardize within the GRPO sampling group as $
\bar{\mathcal{U}}_{i,t}
=
\frac{\hat{\mathcal{U}}_{i,t}-\mu_{\mathcal{G}}}{\sigma_{\mathcal{G}}+\epsilon}.$ Then we modulate the standard advantage $A_{i,t}$ as
\begin{equation}
A^{\mathrm{au}}_{i,t}
=
W(\bar{\mathcal{U}}_{i,t}) A_{i,t},
\end{equation}
where the uncertainty-dependent weight is defined as
$W(\bar{\mathcal{U}}_{i,t})
=
\exp(-\alpha_{\mathrm{au}}\bar{\mathcal{U}}_{i,t}),
$, and $\alpha_{\mathrm{au}}$ is a hyperparameter. This token-level signal is a practical proxy for response instability rather than an exact estimate of semantic aleatoric uncertainty. This shaping down-weights high-uncertainty responses, reducing the impact of noisy generations caused by long-context perturbations or heterogeneous peer contexts.

\textbf{Epistemic Influence Intrinsic Reward.} To encourage epistemic information gain $G_{\text{epi}}$, effective debate requires agents to be not only correct, but also useful to their peers. The core intuition is that the same peer message can have different effects on agents with different latent beliefs. Therefore, in addition to correctness rewards, we need to incentivize responses that can be more effectively utilized by others. We introduce an intrinsic reward $r^{\text{eu}}_{i,t}$ that quantifies the observable positive influence on peers: 
\begin{equation} r^{\text{eu}}_{i,t} = \frac{\eta}{N-1} \sum_{j \neq i} \Delta R_{\text{peer}}(j, t+1), \end{equation} 
where $\Delta R_{\text{peer}}$ denotes the average correctness improvement of multiple rollout trajectories in GRPO with agent $i$'s reference solutions. While this mechanism does not directly compute mutual information, it serves as an observable proxy for epistemic influence by rewarding responses that improve peers' next-round average correctness.

\section{Experiments}
\label{sec:experiments}

\subsection{Experiment Settings}
\textbf{Debate Configurations.} 
For homogeneous debate, we use two \texttt{Qwen2.5-3B-Instruct} agents as $A0$ and $A1$, which provide a stable same-model setting with reliable instruction following and math reasoning ability. To construct a heterogeneous debate system with model diversity while keeping instruction-following ability comparable, we select two different models with generational differences from the Qwen series as \texttt{Qwen2.5-3B-Instruct} ($A0$) and \texttt{Qwen3-4B-Instruct-2507} ($A1$) \cite{yang2024qwen2,yang2025qwen3}. 

\textbf{Datasets.} The agents are trained on the MATH dataset \cite{hendrycks2021measuring} with 7,500 training samples using a two-round-debate training protocol with $T_{train}=2$. This setup evaluates whether a debate policy learned from short local interactions can generalize to longer inference-time debates rather than requiring expensive long-horizon training. During inference, we extend the debate to $T_{test}=5$ rounds to evaluate robustness and generalization against mode collapse. We assess in-distribution performance on MATH500 dataset with 500 samples from MATH and generalization capability on out-of-distribution tasks ranging from grade-school math (GSM8K \cite{cobbe2021training}) to olympiad-level competitions (AMC2023, AIME24, AIME25 \cite{amc2023,aops_aime_2024, aops_aime_2025}).

\textbf{Baselines.} 
We compare UMAD with Zero-Shot MAD and Standard IPPO in the main table. Zero-Shot MAD serves as the inference-time debate baseline, while Standard IPPO is a multi-agent RL baseline without uncertainty guidance. In Appendix~\ref{app:supply_au_eu}, we further include a strong Single-GRPO Pair baseline, where each model is independently fine-tuned with standard GRPO settings for 15 epochs until convergence and then evaluated under the same debate protocol. This baseline uses more standalone training than UMAD and tests whether gains come merely from stronger individual models. Unless otherwise specified, all main results are averaged over five evaluation seeds, and the reported $\pm$ values indicate evaluation variability across seeds.

\begin{table*}[t]
\centering
\caption{Comparisons of individual agent accuracy across initial responses ($T=1$) and debate responses ($T=2, 5$). We report \texttt{pass@1} results for both homogeneous and heterogeneous MAD settings using \texttt{Qwen2.5-3B-Instruct} and \texttt{Qwen3-4B-Instruct-2507}. Light blue \capbox{MBlue!25}{+ Gain} and light pink \capbox{MPink!25}{- Loss} indicate performance variations. Cell colors are assigned relative to the corresponding \texttt{Zero-Shot} baseline at the same turn ($T$). Bold values indicate the best performance among \texttt{Zero-Shot}, \texttt{IPPO}, and \texttt{UMAD} at $T=5$.}
\label{tab:main_results}
\small
\setlength{\tabcolsep}{3.5pt} 
\resizebox{0.95\textwidth}{!}{
\begin{tabular}{l|l|cc|cc|cc|cc|cc|cc}
\toprule
\multirow{2}{*}{\textbf{Setting}} & \multirow{2}{*}{\textbf{Method (Round)}} & \multicolumn{2}{c|}{\textbf{GSM8K}} & \multicolumn{2}{c|}{\textbf{MATH500}} & \multicolumn{2}{c|}{\textbf{AMC23}} & \multicolumn{2}{c|}{\textbf{AIME24}} & \multicolumn{2}{c|}{\textbf{AIME25}} & \multicolumn{2}{c}{\textbf{Average}} \\
 & & \textbf{A0} & \textbf{A1} & \textbf{A0} & \textbf{A1} & \textbf{A0} & \textbf{A1} & \textbf{A0} & \textbf{A1} & \textbf{A0} & \textbf{A1} & \textbf{A0} & \textbf{A1} \\
\midrule
\multirow{9}{*}{\shortstack[l]{Homogeneous\\MAD}} 
& \bgbase{Zero-Shot ($T=1$)} & \bgbase{$\sd{83.3}{0.4}$} & \bgbase{$\sd{83.3}{0.4}$} & \bgbase{$\sd{61.0}{0.5}$} & \bgbase{$\sd{61.0}{0.5}$} & \bgbase{$\sd{35.0}{1.2}$} & \bgbase{$\sd{45.0}{1.1}$} & \bgbase{$\sd{3.3}{0.0}$} & \bgbase{$\sd{0.0}{0.0}$} & \bgbase{$\sd{2.0}{1.8}$} & \bgbase{$\sd{2.0}{1.8}$} & \bgbase{$\sd{36.9}{0.4}$} & \bgbase{$\sd{38.3}{0.4}$} \\
& IPPO ($T=1$) & \bg[10]{$\sd{83.4}{0.3}$} & \bged[15]{$\sd{82.3}{0.4}$} & \bg[35]{$\sd{65.2}{0.5}$} & \bg[15]{$\sd{62.2}{0.6}$} & \bg[50]{$\sd{45.0}{1.2}$} & \bg[0]{$\sd{45.0}{1.1}$} & \bg[0]{$\sd{3.3}{0.4}$} & \bg[5]{$\sd{1.5}{0.3}$} & \bg[5]{$\sd{2.7}{0.4}$} & \bg[5]{$\sd{2.7}{0.4}$} & \bg[25]{$\sd{39.9}{0.4}$} & \bg[5]{$\sd{38.7}{0.5}$} \\
& UMAD ($T=1$) & \bged[10]{$\sd{82.6}{0.4}$} & \bg[5]{$\sd{83.7}{0.3}$} & \bg[15]{$\sd{62.4}{0.5}$} & \bg[20]{$\sd{63.4}{0.4}$} & \bg[40]{$\sd{42.5}{1.4}$} & \bged[30]{$\sd{37.5}{1.6}$} & \bg[5]{$\sd{4.0}{0.5}$} & \bg[10]{$\sd{2.0}{0.4}$} & \bg[5]{$\sd{3.3}{0.6}$} & \bg[5]{$\sd{3.3}{0.5}$} & \bg[15]{$\sd{39.0}{0.5}$} & \bged[5]{$\sd{38.0}{0.6}$} \\
\cmidrule(l{0pt}r{0pt}){2-14}
& \bgbase{Zero-Shot ($T=2$)} & \bgbase{$\sd{83.4}{0.3}$} & \bgbase{$\sd{84.1}{0.4}$} & \bgbase{$\sd{63.0}{0.5}$} & \bgbase{$\sd{64.6}{0.6}$} & \bgbase{$\sd{30.0}{1.3}$} & \bgbase{$\sd{40.0}{1.2}$} & \bgbase{$\sd{3.3}{0.0}$} & \bgbase{$\sd{2.0}{1.8}$} & \bgbase{$\sd{6.7}{0.0}$} & \bgbase{$\sd{2.0}{1.8}$} & \bgbase{$\sd{37.3}{0.0}$} & \bgbase{$\sd{38.5}{0.6}$}\\
& IPPO ($T=2$) & \bg[0]{$\sd{83.4}{0.2}$} & \bged[15]{$\sd{83.3}{0.3}$} & \bg[25]{$\sd{65.8}{0.4}$} & \bged[10]{$\sd{64.2}{0.5}$} & \bg[55]{$\sd{42.5}{1.3}$} & \bg[35]{$\sd{47.5}{1.1}$} & \bg[10]{$\sd{4.0}{0.6}$} & \bg[5]{$\sd{2.7}{0.4}$} & \bg[5]{$\sd{7.3}{0.5}$} & \bg[5]{$\sd{3.3}{0.4}$} & \bg[25]{$\sd{40.6}{0.4}$} & \bg[15]{$\sd{40.2}{0.4}$} \\
& UMAD ($T=2$) & \bg[10]{$\sd{83.9}{0.3}$} & \bg[0]{$\sd{84.1}{0.2}$} & \bg[10]{$\sd{63.4}{0.4}$} & \bged[15]{$\sd{63.8}{0.5}$} & \bg[60]{$\sd{47.5}{1.2}$} & \bg[45]{$\sd{50.0}{1.0}$} & \bg[10]{$\sd{4.7}{0.7}$} & \bg[5]{$\sd{3.3}{0.5}$} & \bg[10]{$\sd{8.0}{0.8}$} & \bg[10]{$\sd{4.0}{0.6}$} & \bg[35]{$\sd{41.5}{0.4}$} & \bg[20]{$\sd{41.0}{0.5}$}\\
\cmidrule(l{0pt}r{0pt}){2-14}
& \bgbase{Zero-Shot ($T=5$)} & \bgbase{$\sd{83.4}{0.4}$} & \bgbase{$\sdb{84.6}{0.3}$} & \bgbase{$\sd{64.4}{0.5}$} & \bgbase{$\sd{64.6}{0.5}$} & \bgbase{$\sd{42.5}{1.2}$} & \bgbase{$\sd{42.5}{1.1}$} & \bgbase{$\sd{3.3}{3.3}$} & \bgbase{$\sd{4.7}{5.1}$} & \bgbase{$\sd{5.3}{1.8}$} & \bgbase{$\sd{6.7}{3.3}$} & \bgbase{$\sd{39.8}{0.8}$} & \bgbase{$\sd{40.6}{1.2}$}\\
& IPPO ($T=5$) & \bg[5]{$\sd{83.5}{0.3}$} & \bged[15]{$\sd{83.7}{0.2}$} & \bg[15]{$\sdb{65.8}{0.4}$} & \bg[15]{$\sdb{65.6}{0.4}$} & \bg[25]{$\sdb{47.5}{1.1}$} & \bg[25]{$\sd{47.5}{1.2}$} & \bg[10]{$\sd{4.7}{0.6}$} & \bg[5]{$\sd{5.3}{0.5}$} & \bg[5]{$\sd{6.7}{0.7}$} & \bg[5]{$\sd{7.3}{0.6}$} & \bg[15]{$\sd{41.6}{0.5}$} & \bg[10]{$\sd{41.9}{0.4}$} \\
& UMAD ($T=5$) & \bg[10]{$\sdb{83.9}{0.2}$} & \bged[15]{$\sd{83.7}{0.3}$} & \bg[10]{$\sd{65.4}{0.5}$} & \bged[5]{$\sd{64.4}{0.6}$} & \bg[25]{$\sdb{47.5}{1.3}$} & \bg[45]{$\sdb{52.5}{1.1}$} & \bg[15]{$\sdb{5.3}{0.7}$} & \bg[5]{$\sdb{6.0}{0.6}$} & \bg[25]{$\sdb{8.0}{0.9}$} & \bg[20]{$\sdb{8.7}{0.8}$} & \bg[20]{$\sdb{42.0}{0.4}$} & \bg[20]{$\sdb{43.1}{0.5}$}\\
\midrule
\multirow{9}{*}{\shortstack[l]{Heterogeneous\\MAD}} 
& \bgbase{Zero-Shot ($T=1$)} & \bgbase{$\sd{82.8}{0.5}$} & \bgbase{$\sd{90.4}{0.3}$} & \bgbase{$\sd{63.6}{0.6}$} & \bgbase{$\sd{75.2}{0.5}$} & \bgbase{$\sd{37.5}{1.5}$} & \bgbase{$\sd{60.0}{1.4}$} & \bgbase{$\sd{5.0}{7.1}$} & \bgbase{$\sd{20.0}{0.0}$} & \bgbase{$\sd{5.0}{2.4}$} & \bgbase{$\sd{30.0}{4.7}$} & \bgbase{$\sd{38.8}{1.9}$} & \bgbase{$\sd{55.1}{0.9}$} \\
& IPPO ($T=1$) & \bged[10]{$\sd{82.0}{0.4}$} & \bged[5]{$\sd{90.1}{0.3}$} & \bged[10]{$\sd{62.6}{0.5}$} & \bg[10]{$\sd{76.0}{0.6}$} & \bged[35]{$\sd{27.5}{4.2}$} & \bged[20]{$\sd{55.0}{3.8}$} & \bged[5]{$\sd{4.2}{1.5}$} & \bged[5]{$\sd{18.9}{2.1}$} & \bged[5]{$\sd{4.2}{1.3}$} & \bged[5]{$\sd{28.9}{2.8}$} & \bged[25]{$\sd{36.1}{1.5}$} & \bged[10]{$\sd{53.8}{1.8}$} \\
& UMAD ($T=1$) & \bg[10]{$\sd{83.2}{0.9}$} & \bged[10]{$\sd{89.5}{0.7}$} & \bged[10]{$\sd{63.0}{0.7}$} & \bg[35]{$\sd{79.6}{0.9}$} & \bg[10]{$\sd{38.3}{5.2}$} & \bged[15]{$\sd{55.8}{3.8}$} & \bged[10]{$\sd{3.3}{3.3}$} & \bged[10]{$\sd{17.8}{3.8}$} & \bged[25]{$\sd{0.0}{0.0}$} & \bged[10]{$\sd{27.8}{8.4}$} & \bged[10]{$\sd{37.6}{0.6}$} & \bged[10]{$\sd{54.1}{2.1}$} \\
\cmidrule(l{0pt}r{0pt}){2-14}
& \bgbase{Zero-Shot ($T=2$)} & \bgbase{$\sd{85.6}{0.4}$} & \bgbase{$\sd{90.3}{0.3}$} & \bgbase{$\sd{72.0}{0.6}$} & \bgbase{$\sd{77.8}{0.5}$} & \bgbase{$\sd{50.0}{1.6}$} & \bgbase{$\sd{65.0}{1.3}$} & \bgbase{$\sd{15.0}{7.1}$} & \bgbase{$\sd{30.0}{0.0}$} & \bgbase{$\sd{13.3}{9.4}$} & \bgbase{$\sd{36.7}{4.7}$} & \bgbase{$\sd{47.2}{0.5}$} & \bgbase{$\sd{60.0}{0.9}$}\\
& IPPO ($T=2$) & \bged[15]{$\sd{84.1}{0.5}$} & \bged[10]{$\sd{89.8}{0.4}$} & \bg[20]{$\sd{73.6}{0.6}$} & \bg[25]{$\sd{79.8}{0.7}$} & \bg[35]{$\sd{60.0}{4.5}$} & \bg[25]{$\sd{70.5}{4.0}$} & \bg[15]{$\sd{17.5}{2.1}$} & \bged[5]{$\sd{29.4}{2.5}$} & \bg[25]{$\sd{19.5}{2.3}$} & \bged[5]{$\sd{36.1}{3.2}$} & \bg[35]{$\sd{50.9}{2.1}$} & \bg[10]{$\sd{61.1}{2.3}$} \\
& UMAD ($T=2$) & \bg[35]{$\sd{90.0}{0.7}$} & \bg[10]{$\sd{90.7}{0.4}$} & \bg[55]{$\sd{83.3}{0.3}$} & \bg[45]{$\sd{82.7}{1.0}$} & \bg[45]{$\sd{65.8}{6.3}$} & \bged[15]{$\sd{62.5}{2.5}$} & \bg[20]{$\sd{20.0}{8.8}$} & \bged[5]{$\sd{28.9}{1.9}$} & \bg[40]{$\sd{26.7}{8.8}$} & \bged[5]{$\sd{35.6}{5.1}$} & \bg[60]{$\sd{57.2}{3.6}$} & \bg[5]{$\sd{60.1}{1.0}$}\\
\cmidrule(l{0pt}r{0pt}){2-14}
& \bgbase{Zero-Shot ($T=5$)} & \bgbase{$\sd{87.3}{0.5}$} & \bgbase{$\sd{90.5}{0.3}$} & \bgbase{$\sd{78.2}{0.6}$} & \bgbase{$\sd{84.6}{0.4}$} & \bgbase{$\sd{60.0}{1.4}$} & \bgbase{$\sd{75.0}{1.2}$} & \bgbase{$\sd{28.3}{2.4}$} & \bgbase{$\sdb{38.3}{7.1}$} & \bgbase{$\sd{20.0}{0.0}$} & \bgbase{$\sd{40.0}{4.7}$} & \bgbase{$\sd{54.8}{0.5}$} & \bgbase{$\sd{65.7}{0.5}$}\\
& IPPO ($T=5$) & \bg[15]{$\sd{88.3}{0.4}$} & \bged[5]{$\sd{90.3}{0.3}$} & \bg[25]{$\sd{81.0}{0.5}$} & \bg[10]{$\sd{85.1}{0.6}$} & \bg[40]{$\sd{70.5}{3.8}$} & \bg[25]{$\sdb{82.5}{3.2}$} & \bg[10]{$\sd{30.2}{2.2}$} & \bged[5]{$\sd{36.4}{2.9}$} & \bg[45]{$\sd{31.1}{2.5}$} & \bg[10]{$\sd{42.8}{3.1}$} & \bg[55]{$\sd{60.2}{1.8}$} & \bg[15]{$\sdb{67.4}{2.0}$} \\
& UMAD ($T=5$) & \bg[30]{$\sdb{91.1}{0.6}$} & \bg[15]{$\sdb{91.1}{0.2}$} & \bg[35]{$\sdb{87.0}{0.7}$} & \bg[15]{$\sdb{86.1}{0.6}$} & \bg[55]{$\sdb{80.0}{4.3}$} & \bged[15]{$\sd{71.7}{3.8}$} & \bg[15]{$\sdb{32.2}{6.9}$} & \bged[10]{$\sd{34.4}{5.1}$} & \bg[75]{$\sdb{42.2}{5.1}$} & \bg[20]{$\sdb{45.6}{7.7}$} & \bg[65]{$\sdb{66.5}{2.8}$} & \bg[5]{$\sd{65.8}{3.1}$}\\
\bottomrule
\end{tabular}
}
\end{table*}

\subsection{Main Results}

Our results show that uncertainty-guided training improves heterogeneous debate primarily through asymmetric gains for the weaker agent (Table~\ref{tab:main_results}). In the heterogeneous setting, UMAD raises the weaker \texttt{Qwen2.5-3B-Instruct} agent from $54.8$ to $66.5$ average accuracy at $T=5$, outperforming both Zero-Shot MAD and Standard IPPO. The stronger \texttt{Qwen3-4B-Instruct-2507} agent remains close to its Zero-Shot performance but is not uniformly improved, indicating that the main benefit of UMAD is not simply increasing standalone accuracy for all agents. Instead, the gains suggest improved utilization of complementary peer information by the weaker model.

In homogeneous settings, the training gains of both IPPO and UMAD are modest and largely saturate, suggesting a ceiling when epistemic diversity is limited. The improvements over zero-shot MAD baselines are small, which is consistent with the view that homogeneous debate mainly acts as self-consistency \cite{choi2025debate}. Without sufficiently non-redundant peer information, extensive epistemic overlap constrains progress. Notably, UMAD provides small but stable gains, consistent with the view that homogeneous debate has limited non-redundant epistemic information. 

Finally, Table \ref{tab:main_results} shows that baseline methods tend to saturate or degrade as the number of debate rounds increases, leading to saturation or decline, whereas UMAD remains robust as $T$ increases. The consistent improvements up to $T=5$, especially in heterogeneous and out-of-distribution tasks, indicate that uncertainty-guided training improves robustness to longer debate horizons. By down-weighting unstable responses and rewarding useful peer influence, agents sustain constructive long-context interaction.

\subsection{Ablation Analysis}
\textbf{Effectiveness of Uncertainty Guidance.}
Comparing UMAD with Standard IPPO isolates the effect of uncertainty-guided optimization. In heterogeneous debate, UMAD achieves the largest improvement for the weaker agent, especially at $T=5$, improving its average accuracy from $60.2$ under IPPO to $66.5$. This suggests that uncertainty-guided rewards improve peer-information utilization beyond standard multi-agent RL. Additional ablations in Appendix~\ref{app:supply_au_eu} further show that NLL-only and Intrinsic-only training underperform Full UMAD on the weaker agent, indicating that aleatoric control and epistemic influence are complementary. We also compare with independently trained Single-GRPO agents; although this stronger baseline improves the stronger agent, UMAD achieves higher post-debate accuracy for the weaker agent, suggesting that debate co-training improves heterogeneous knowledge absorption beyond standalone RL. We observe that standard IPPO agents can converge to lazy consensus with short responses such as \textit{"I agree with the other agent's answer."}, whereas UMAD encourages more constructive peer influence on harder problems.

\textbf{Generalization to Longer Debates.}
Although UMAD is trained with short two-round debates ($T_{\mathrm{train}}=2$), it generalizes to $T=5$ inference-time debates without divergence. For the weaker agent in heterogeneous debate, the average accuracy further increases from $57.2$ at $T=2$ to $66.5$ at $T=5$. This indicates that the learned policy is not merely overfitted to a fixed trajectory length. Instead, UMAD learns a reusable propose-refine interaction pattern that remains effective under longer debate horizons.

\textbf{Visualizing Uncertainty Dynamics.}
Fig.~\ref{fig:ablation_uncertainty} supports our uncertainty-based interpretation. After UMAD training, aleatoric uncertainty decreases across debate rounds, while the weaker agent's accuracy changes from stagnation to steady improvement. This suggests that the weaker agent benefits from more stable utilization of the stronger peer's reasoning, rather than merely conforming to peer outputs.

\begin{figure}[t]
    \centering

    \begin{subfigure}{0.85\linewidth}
        \centering
        \includegraphics[width=\linewidth]{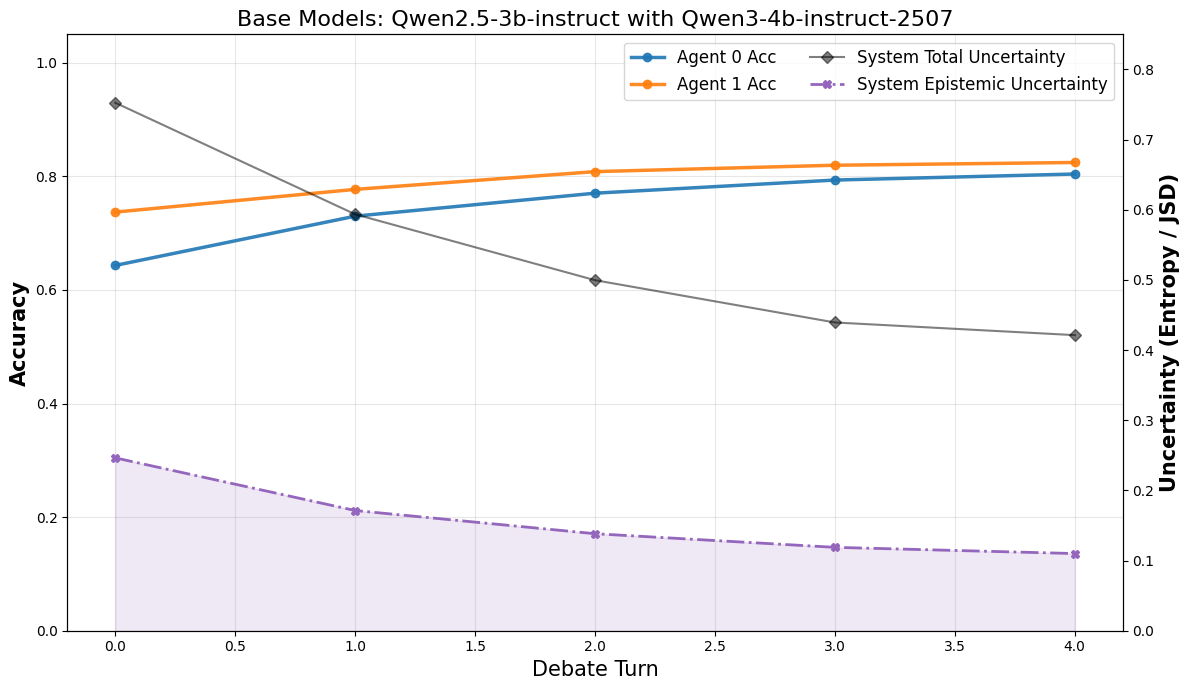}
        \caption{Base Models}
    \end{subfigure}

    \vspace{0.3em}

    \begin{subfigure}{0.85\linewidth}
        \centering
        \includegraphics[width=\linewidth]{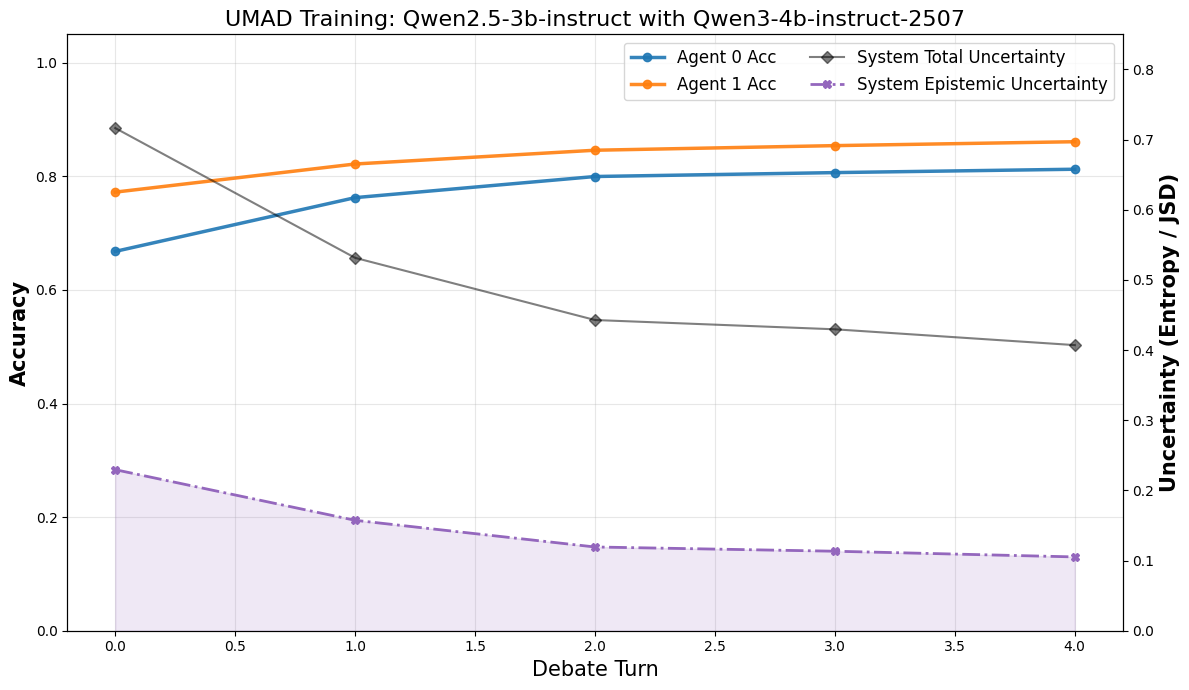}
        \caption{UMAD Models}
    \end{subfigure}

    \vspace{-6pt}
    \caption{Uncertainty dynamics (gray and purple) vs. accuracy (orange and blue). UMAD training suppresses aleatoric noise and improves both agents' accuracies across debate turns.}
    \label{fig:ablation_uncertainty}
    \vspace{-8pt}
\end{figure}

\section{Related Works}
\textbf{Multi-Agent Debate.} Prior works in MAD \cite{du2023improving, liang2024encouraging, chan2024chateval} show that iterative debates can improve reasoning over single-agent prompting methods such as CoT \cite{wei2022chain} and Self-Consistency \cite{wang2023selfconsistency}. Existing studies mainly improve MAD through workflow design \cite{liu2024groupdebate, zhang2025debate4math, maiga2025error, yao2025peacemaker}, communication topology optimization \cite{li2024improving, sun2025cortexdebate, eo2025necessary}, model-combination analysis \cite{zhang2025if, smit2023should}, or confidence-aware prompting and aggregation \cite{yang2024confidence, chen2024reconcile, lin2025enhancing, yoffe2024debunc}. However, recent findings show that inference-time debate can suffer from sycophancy, long-context drift, and consensus without correctness \cite{choi2025debate, pan2025multiagent, yang2025revisiting, wynn2025talk}. Unlike methods that rely on frozen-model prompting or aggregation rules, we analyze answer-level uncertainty during debate and use uncertainty-guided training signals to improve peer information utilization.

\textbf{Diversity and Uncertainty in MAD.}
Concurrent studies have also examined diversity and uncertainty in MAD. \citet{zhu2026demystifying} attribute MAD failures to low initial diversity and uncalibrated confidence, while \citet{ai2025beyond,li2026dynadebate} optimize aggregation or reasoning paths to improve information gains. Other works highlight limitations of heterogeneous teams, including the leveraging gap and integrative compromise \cite{li2026m3mad,pappu2026multi}, or explain heterogeneity through complementary evidence \cite{yang2026understanding}. These works mainly analyze inference-time behavior or aggregation protocols. In contrast, our work connects uncertainty decomposition to trainable debate behavior through aleatoric-control and epistemic-influence proxies. This enables the system to actively manage the trade-off between exploring diverse peer perspectives and maintaining individual reasoning stability. Consequently, we transform empirical debate heuristics into a formal, optimization-driven training process.

\textbf{Uncertainty in LLMs and Multi-Agent RL.}
Uncertainty quantification has been widely studied for LLM trustworthiness \cite{huang2025look}, including token-level confidence, semantic uncertainty \cite{kuhnsemantic, wang2025measuring}, and Bayesian decompositions into aleatoric and epistemic uncertainty \cite{liu2025uncertainty, huang2024survey, geng2024confidencesurvey, abbasi2024believe, jayasekera2025variational, ling2024uncertainty}. Related work also uses multi-agent perturbations \cite{amayuelas2024knowledge} or fine-tuning \cite{feng2025rethinking} to elicit uncertainty signals. Separately, recent MARL methods train LLM-based multi-agent systems through role design, policy optimization, or credit assignment \cite{park2025maporl,liao2025marft,liu2025llm,zhao2025stronger}. Our method combines these directions by using uncertainty diagnostics within MAD to design MARL rewards that stabilize individual reasoning and encourage useful peer influence. More related work is discussed in Appendix~\ref{app:more_related}.

\section{Conclusion}
In this work, we address the inherent instability and inefficiency of Multi-Agent Debate, particularly in heterogeneous settings, through a Bayesian uncertainty lens. We identify that performance degradation stems from uncontrolled aleatoric noise and the failure to effectively utilize epistemic disagreement. To resolve this, we propose an uncertainty-guided MARL framework that actively regulates internal reasoning confidence and incentivizes constructive information exchange. Our approach enables robust performance gains and prevents belief collapse in long-turn debates, transforming MAD from a fragile inference-time heuristic into a stable, learnable mechanism for reliable reasoning.

\clearpage

\section*{Acknowledgements}
Baoxiang Wang and Dan Qiao are partially supported by the National Natural Science Foundation of China (No. 72394361) and the Shenzhen Science and Technology Program (Nos. JCYJ20250604141218024 and JCYJ20250604141032005). 
Wenhao Li is supported by the NSFC (No. 62406270) and the STCSM Shanghai Rising-Star Program (No. 24YF2748800). Fengyu Cai is funded by the German Federal Ministry of Education and Research as part of the Software Campus 3.0 project ETRAG (funding code 16IS23067). The authors would like to express their sincere gratitude to Youliang Yuan, Yajiao Liu, Fei Yu, Xiaoyuan Liu, Jiawei Xu, Yue Lin, Han Wang, Ang Li, Zhongxiang Dai, and Zhiwei Shang from the Chinese University of Hong Kong, Shenzhen (CUHKSZ) for their insightful discussions and constant support. 
We are also grateful to Lihu Chen from Imperial College London for his valuable feedback during the early development of this work. 
Finally, we thank the anonymous reviewers for their constructive comments and suggestions.

\section*{Impact Statement}

This work aims to enhance the reliability of Large Language Models by grounding multi-agent debate in a Bayesian framework. By mitigating confident hallucinations, our method (UMAD) supports the deployment of trusted AI in critical fields like mathematics and science. However, we acknowledge that explicitly optimizing agents for "persuasiveness" carries potential risks; if applied to subjective or sensitive domains, this mechanism could theoretically be misused to generate convincing but manipulative content. Future research should ensure such capabilities remain strictly aligned with factual verification. Additionally, we address the environmental footprint of multi-agent systems by demonstrating that efficient, short-horizon training generalizes effectively to long-context inference.

\nocite{langley00}

\bibliography{example_paper}
\bibliographystyle{icml2026}

\newpage
\appendix
\onecolumn

\section{More Related Works}
\label{app:more_related}
\paragraph{Multi-Agent Debate.}
Early MAD studies demonstrate that iterative interaction among multiple LLM agents can improve reasoning by eliciting diverse solution paths and aggregating final answers \cite{du2023improving, liang2024encouraging, chan2024chateval}. Subsequent work improves MAD through task-specific workflows \cite{liu2024groupdebate, zhang2025debate4math, maiga2025error, yao2025peacemaker}, communication topologies \cite{li2024improving, sun2025cortexdebate, eo2025necessary}, and model-combination analysis \cite{zhang2025if, smit2023should}. Another line of work introduces confidence signals into debate, either by adding peer confidence to prompts \cite{yang2024confidence, chen2024reconcile} or by using confidence as aggregation weights \cite{lin2025enhancing, eo2025necessary, yoffe2024debunc}. However, recent studies show that inference-time debate does not always improve individual reasoning. Homogeneous debates can stagnate after removing aggregation \cite{choi2025debate}, agents may flip correct answers to incorrect ones \cite{wynn2025talk}, and sycophancy or long-context drift can push agents toward superficial agreement \cite{pan2025multiagent, yang2025revisiting}. These findings motivate our focus on individual belief evolution rather than final aggregation accuracy.

\paragraph{Diversity and Uncertainty in MAD.}
Several concurrent studies analyze the role of diversity, confidence, and uncertainty in MAD. \citet{zhu2026demystifying} argue that low initial diversity and uncalibrated confidence are key causes of MAD failures, and propose a textual-confidence-modulated protocol to mitigate sycophancy. To quantify the behavioral biases that disrupt collective reasoning, \citet{choi2025identity} formalize debate dynamics as an identity-weighted Bayesian update process and propose response anonymization to enforce equal weighting on agent identities. Similarly, \citet{ai2025beyond} study higher-order aggregation rules for combining multiple responses, while \citet{li2026dynadebate} dynamically generates diverse reasoning paths to break homogeneous reasoning patterns. Benchmark studies such as \citet{li2026m3mad} question whether MAD is consistently effective across domains and modalities. \citet{pappu2026multi} further show Graves-agent teams may underutilize their strongest members due to a leveraging gap and integrative compromise. Complementarily, \citet{yang2026understanding} provides an information-theoretic explanation of agent scaling through diversity and complementary evidence. Our work differs from these studies by linking diversity and uncertainty to an explicit epistemic-gain versus aleatoric-cost decomposition, and by using this decomposition to design training-time objectives rather than only inference-time protocols, anonymization rules, or aggregation mechanics.

\paragraph{Uncertainty Quantification in LLMs.}
Uncertainty quantification is central to assessing LLM reliability \cite{huang2025look}. Token-level probabilities provide a convenient confidence signal but often fail to capture semantic equivalence or reasoning uncertainty. Semantic uncertainty methods address this limitation by clustering semantically equivalent generations and measuring uncertainty over meaning rather than surface forms \cite{kuhnsemantic, wang2025measuring}. From an aggregation perspective, \citet{choi2026modex} leverage semantic consensus across multi-path generations by constructing similarity graphs and applying spectral clustering to achieve evaluator-free Best-of-N selection. Bayesian uncertainty decompositions further distinguish epistemic uncertainty, associated with model knowledge or reducible uncertainty, from aleatoric uncertainty, associated with inherent ambiguity or response variability \cite{liu2025uncertainty, huang2024survey, geng2024confidencesurvey}. These ideas have been explored in general question answering \cite{abbasi2024believe}, in-context learning \cite{jayasekera2025variational, ling2024uncertainty}, multi-agent perturbation methods \cite{amayuelas2024knowledge}, and uncertainty-aware fine-tuning \cite{feng2025rethinking}. In contrast, we study uncertainty as a dynamic property of the MAD process itself, where cross-agent disagreement and within-agent response instability evolve over debate turns.

\paragraph{Multi-Agent Reinforcement Learning in LLMs.}
With the development of LLM-based multi-agent systems, recent research has moved from heuristic workflow design toward trainable multi-agent learning paradigms. Early explorations often focused on interaction-driven data generation or task-specific role design. For instance, SPIRAL \cite{liu2025spiral} demonstrated that reasoning capabilities can emerge from self-play in zero-sum games, though its primary focus is action-based tasks rather than mathematical reasoning. Multi-Agent Finetuning (MAFT) \cite{subramaniam2025multiagent} uses multi-agent interactions to collect diverse reasoning chains for supervised fine-tuning. ReMA \cite{wan2025rema} introduces a hierarchical architecture that alternates between a ``Meta-thinking Agent'' and a ``Reasoning Agent'' to decouple strategic oversight from execution.
More recent studies formulate multi-agent reasoning as a reinforcement learning problem. MAPoRL \cite{park2025maporl} formalizes debate as a MARL process and adopts an IPPO-style \cite{de2020independent} training objective to reward persuasive and corrective behaviors. MARFT \cite{liao2025marft} proposes the LaMAS framework, motivated by HAPPO \cite{kuba2021trust}, to improve policy updates for LLM-based agents. Other methods focus on credit assignment and group dynamics. MAGRPO \cite{liu2025llm} introduces a centralized team-reward style MARL method based on MAPPO \cite{yu2022surprising}, while AT-GRPO \cite{zhao2025stronger} uses agent- and turn-aware grouping to handle non-stationarity in multi-turn dialogues. Further research explores consensus-based alignment in MACA \cite{samanta2025self} and meta-policy deliberation in MPDF \cite{yang2025learning}. Systems such as MARTI \cite{zhang2025marti} provide scalable infrastructure for asynchronous multi-agent rollouts.
Most of these MARL studies focus on homogeneous settings, role prompting, or general credit assignment, and do not explicitly analyze the uncertainty trade-off that governs debate success. Concurrently, \citet{tang2026value} address debate collapse by proposing a hierarchical uncertainty metric and formulating an uncertainty-driven policy optimization to penalize behavioral volatility and peer conflicts. In contrast, UMAD connects MARL training to an explicit epistemic-gain versus aleatoric-cost decomposition of MAD. Its epistemic influence reward encourages responses that improve peer reasoning, while its aleatoric-uncertainty-aware advantage shaping suppresses unstable generations. This makes the training objective directly aligned with the uncertainty mechanism identified in our analysis.

\section{Theoretical Analysis}

\subsection{Proof of Lemma \ref{lem:log_odds}}\label{app:proof_lemma1}
\textbf{Lemma Statement.}
\textit{
Assume $p(h \mid x, c \oplus m) = p(h \mid x, c, m)$ and $p(m \mid x, c) > 0$. Then:
\[
\log \frac{p(h=1 \mid x, c \oplus m)}{p(h=0 \mid x, c \oplus m)}
=
\log \frac{p(h=1 \mid x, c)}{p(h=0 \mid x, c)}
+
\log \frac{p(m \mid h=1, x, c)}{p(m \mid h=0, x, c)}.
\]
}

\begin{proof}
Recall Bayes' theorem:
\begin{equation}
    p(h \mid x, c, m) = \frac{p(m \mid h, x, c) \cdot p(h \mid x, c)}{p(m \mid x, c)}.
\end{equation}
Writing this for $h=1$ and $h=0$ separately:
\begin{align}
    p(h=1 \mid x, c, m) &= \frac{p(m \mid h=1, x, c) \cdot p(h=1 \mid x, c)}{p(m \mid x, c)}, \\
    p(h=0 \mid x, c, m) &= \frac{p(m \mid h=0, x, c) \cdot p(h=0 \mid x, c)}{p(m \mid x, c)}.
\end{align}
Dividing the first equation by the second cancels out the marginal likelihood $p(m \mid x, c)$:
\begin{equation}
    \frac{p(h=1 \mid x, c, m)}{p(h=0 \mid x, c, m)} = \frac{p(h=1 \mid x, c)}{p(h=0 \mid x, c)} \cdot \frac{p(m \mid h=1, x, c)}{p(m \mid h=0, x, c)}.
\end{equation}
Taking the natural logarithm on both sides yields the additive log-odds update:
\begin{equation}
    \log \frac{p(h=1 \mid x, c \oplus m)}{p(h=0 \mid x, c \oplus m)}
    =
    \log \frac{p(h=1 \mid x, c)}{p(h=0 \mid x, c)}
    +
    \log \frac{p(m \mid h=1, x, c)}{p(m \mid h=0, x, c)}.
\end{equation}
This concludes the proof.
\end{proof}

\subsection{Proof of Proposition \ref{prop:sys_decomp}}\label{app:proof_prop2}
\textbf{Proposition Statement.}
\textit{
Let $\mathrm{TU}(t) := H(\bar{p}_t)$. Then $\mathrm{TU}(t) = \mathrm{Sys\mbox{-}EU}(t) + \mathrm{Sys\mbox{-}Cost}(t)$, where $\mathrm{Sys\mbox{-}EU}(t) = \mathrm{JSD}(p_{1,t}, \dots, p_{N,t})$ and $\mathrm{Sys\mbox{-}Cost}(t) = \frac{1}{N}\sum_i H(p_{i,t})$.
}

\begin{proof}
This follows directly from the definition of the Generalized Jensen-Shannon Divergence (JSD).
For a set of distributions $\{p_1, \dots, p_N\}$ and weights $\pi_1 = \dots = \pi_N = \frac{1}{N}$, the JSD is defined as:
\begin{equation}
    \mathrm{JSD}(p_1, \dots, p_N) := H\left(\sum_{i=1}^N \pi_i p_i\right) - \sum_{i=1}^N \pi_i H(p_i).
\end{equation}
Substituting our notation $\bar{p}_t = \frac{1}{N}\sum_i p_{i,t}$:
\begin{equation}
    \mathrm{Sys\mbox{-}EU}(t) = H(\bar{p}_t) - \frac{1}{N}\sum_{i=1}^N H(p_{i,t}).
\end{equation}
Rearranging the terms:
\begin{equation}
    H(\bar{p}_t) = \mathrm{Sys\mbox{-}EU}(t) + \frac{1}{N}\sum_{i=1}^N H(p_{i,t}).
\end{equation}
By definition, $\mathrm{TU}(t) = H(\bar{p}_t)$ and $\mathrm{Sys\mbox{-}Cost}(t) = \frac{1}{N}\sum_i H(p_{i,t})$, which proves the identity.
\end{proof}

\subsection{Proof of Theorem \ref{thm:hetero_gain}}
\label{app:proof_theo3}
\begin{proof}
By the chain rule of conditional mutual information, the total epistemic information provided by two homogeneous messages can be decomposed as
\begin{align}
I(\varphi; m_{\mathrm{homo}},m'_{\mathrm{homo}} \mid x,c)
&=
I(\varphi; m_{\mathrm{homo}} \mid x,c)
+
I(\varphi; m'_{\mathrm{homo}} \mid x,c,m_{\mathrm{homo}}).
\end{align}
Similarly, replacing the second homogeneous message with a heterogeneous message gives
\begin{align}
I(\varphi; m_{\mathrm{homo}},m_{\mathrm{hetero}} \mid x,c)
&=
I(\varphi; m_{\mathrm{homo}} \mid x,c)
+
I(\varphi; m_{\mathrm{hetero}} \mid x,c,m_{\mathrm{homo}}).
\end{align}
Subtracting the two identities yields
\begin{align}
& I(\varphi; m_{\mathrm{homo}},m_{\mathrm{hetero}} \mid x,c)
-
I(\varphi; m_{\mathrm{homo}},m'_{\mathrm{homo}} \mid x,c) \nonumber \\
&=
I(\varphi; m_{\mathrm{hetero}} \mid x,c,m_{\mathrm{homo}})
-
I(\varphi; m'_{\mathrm{homo}} \mid x,c,m_{\mathrm{homo}}).
\end{align}
By Assumption~\ref{ass:homo_saturation},
\[
I(\varphi; m'_{\mathrm{homo}} \mid x,c,m_{\mathrm{homo}})
\le
\epsilon_{\mathrm{homo}},
\]
and by the heterogeneous complementarity condition,
\[
I(\varphi; m_{\mathrm{hetero}} \mid x,c,m_{\mathrm{homo}})
\ge
\epsilon_{\mathrm{homo}}+\Delta.
\]
Therefore,
\[
I(\varphi; m_{\mathrm{homo}},m_{\mathrm{hetero}} \mid x,c)
-
I(\varphi; m_{\mathrm{homo}},m'_{\mathrm{homo}} \mid x,c)
\ge
\Delta,
\]
which proves the claim.
\end{proof}

\section{Algorithms and Pseudocode}

\begin{figure}[t]
    \centering
    \includegraphics[width=0.5\textwidth]{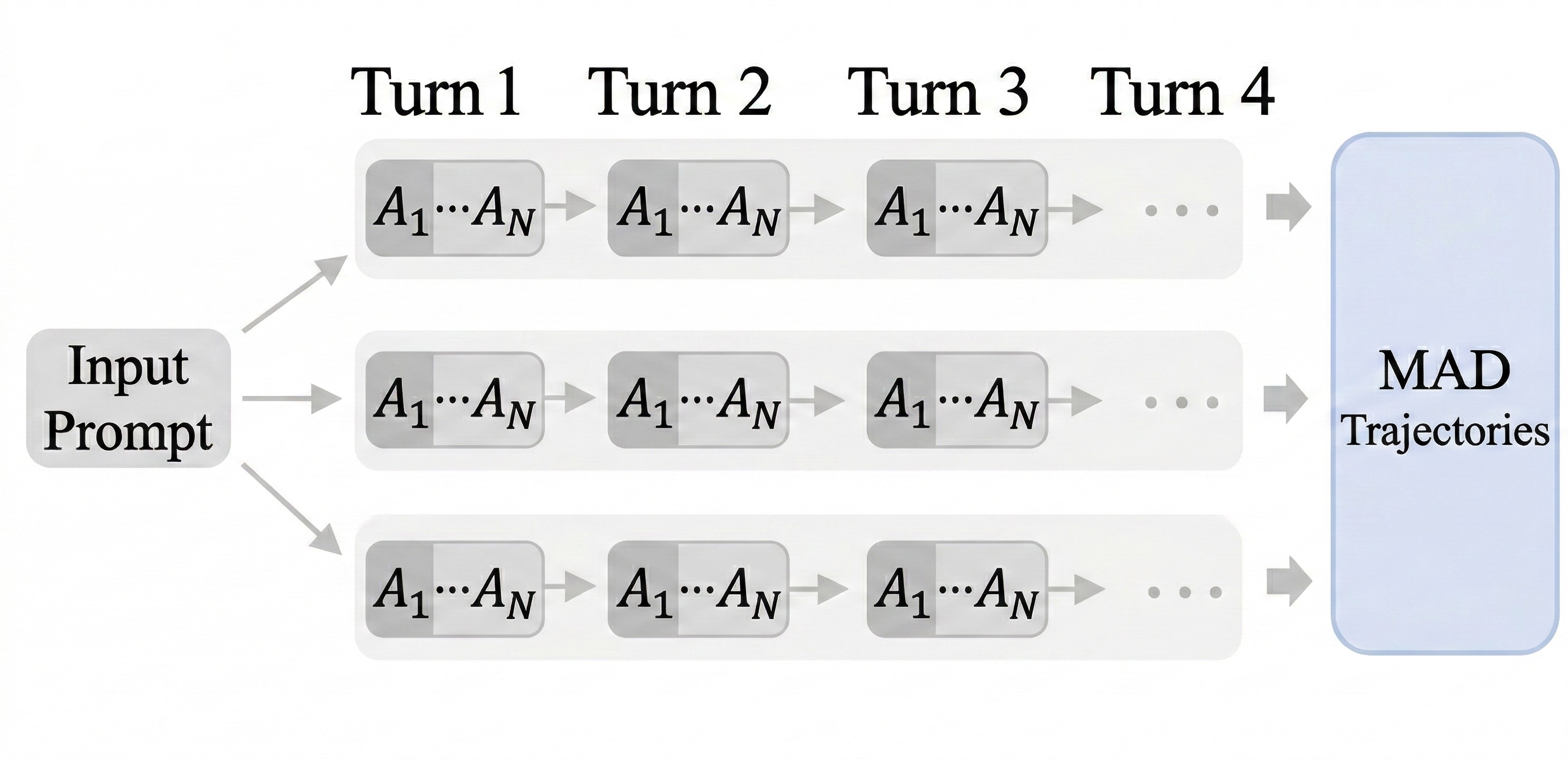}
    \caption{Trajectory-level pairwise debate rollouts strategy.}
    \label{fig:debate_rollout}
    
\end{figure}

\subsection{Uncertainty Quantification of Multi Agent Debate}

In the context of multi-agent debate, standard step-level branching leads to a combinatorial explosion of interaction paths—a computational bottleneck highlighted by \cite{feng2025gigpo}. To mitigate this, we adopt a 'Trajectory-level Pairwise Debate Rollouts' strategy that maintains $K$ parallel debate histories as shown in Fig.~\ref{fig:debate_rollout}. Specifically, at the initial turn ($t=0$), each participating agent generates $K=16$ independent rollouts. These responses are then synchronized by index to form $K$ distinct conversation threads (i.e., the $k$-th response of Agent A is paired with the $k$-th response of Agent B). For all subsequent turns ($t>0$), agents generate a single rollout $K=1$ for each of these established trajectories. This mechanism ensures that each of the $K$ trajectories evolves linearly with consistent context, enabling efficient GRPO without the exponential overhead of recursive branching.

Then we utilize these $K$ parallel debate rollouts to  approximate the implicit belief distribution by aggregating the frequency of final answers. To ensure robustness, we extract the answer key from each rollout and merge semantically equivalent terms. Crucially, to enable valid divergence calculations across agents who may sample disjoint answer sets, we define the common support set $\mathcal{Y}_{union}$ as the union of all unique answers generated by the entire team at turn $t$. Each agent $i$'s empirical distribution $\hat{p}{i,t}(y)$ is then computed over this global support $\mathcal{Y}_{union}$, assigning zero probability to unobserved outcomes. Based on these aligned distributions, we quantify the uncertainty using the team's mean distribution $\bar{p}_t(y) = \frac{1}{N}\sum_{i=1}^N \hat{p}{i,t}(y)$. Specifically, the \textbf{Total Uncertainty} is defined as the entropy of the team mean $H(\bar{p}_t)$. This is further decomposed into \textbf{Epistemic Uncertainty}, and \textbf{Aleatoric Uncertainty}, measured by the Jensen-Shannon Divergence among agents and the average entropy of individual distributions. The pseudo-code is demonstrated in Table \ref{alg:mad_uncertainty}.

\begin{algorithm}[tb]
   \caption{MAD Uncertainty Quantification}
   \label{alg:mad_uncertainty}
\begin{algorithmic}[1]
   \STATE {\bfseries Input:} Problems Dataset $\mathcal{D}$, Agents $\{\pi_{\theta_i}\}_{i=1}^N$, Max Turns $T$, Sample Count $K=16$
   \FOR{each problem $x$ in $\mathcal{D}$}
   \STATE Initialize contexts $c_{i, 0} \leftarrow \emptyset$ for all $i$
   \FOR{turn $t = 1$ to $T$}
       \STATE \emph{// Step 1: Estimate Uncertainty via Sampling}
       \FOR{agent $i = 1$ to $N$}
           \STATE Generate $K$ sample paths: $\{y_{i,t}^{(k)}\}_{k=1}^K \sim \pi_{\theta_i}(\cdot \mid x, c_{i, t-1})$
           \STATE Extract answers and estimate distribution $p_{i,t}(y)$ based on count frequencies
           \STATE Calculate Agent Aleatoric Entropy: $U_{i,t}^{ale} = H(p_{i,t})$
       \ENDFOR
       
       \STATE \emph{// Step 2: Calculate System Uncertainty Metrics (Eq. \ref{eq:sys_decomp})}
       \STATE Calculate Mixture Distribution: $p_{\mathrm{sys},t}(y) = \frac{1}{N} \sum_{i=1}^N p_{i,t}(y)$
       \STATE \textbf{System TU:} $\mathrm{TU}_t = H(p_{\mathrm{sys},t})$
       \STATE \textbf{System AU:} $\mathrm{Sys\mbox{-}AU}_t = \frac{1}{N} \sum_{i=1}^N U_{i,t}^{ale}$
       \STATE \textbf{System EU (JSD):} $\mathrm{Sys\mbox{-}EU}_t = \mathrm{TU}_t - \mathrm{Sys\mbox{-}AU}_t$
       
       \STATE \emph{// Step 3: Proceed to Next Round (Single Trajectory)}
       \FOR{agent $i = 1$ to $N$}
           \STATE Sample a single response $\hat{y}_{i,t} \sim \pi_{\theta_i}(\cdot \mid x, c_{i, t-1})$ \emph{// or select greedily}
       \ENDFOR
       \STATE Update Context: $c_{i, t} \leftarrow \Phi_i(\{\hat{y}_{j,t}\}_{j=1}^N)$ 
   \ENDFOR
   \ENDFOR
\end{algorithmic}
\end{algorithm}

\subsection{UMAD Training Algorithm}

Algorithm \ref{alg:umad_training} presents the Uncertainty-Guided Multi-Agent Debate (UMAD) training framework, which extends GRPO to optimize collaborative reasoning. The procedure involves sampling a group of joint debate trajectories to estimate advantage baselines, incorporating two key regulatory mechanisms: an Epistemic Influence Intrinsic Reward that credits agents for positively shaping peer performance in subsequent turns, and an Aleatoric Uncertainty-Aware Advantage that modulates gradient updates based on token-level confidence. By integrating these components, the algorithm optimizes the policy to minimize internal reasoning noise while maximizing the effective information gain provided to the multi-agent system.

\begin{algorithm}[tb]
   \caption{Uncertainty-Guided Multi-Agent Debate Training (UMAD)}
   \label{alg:umad_training}
\begin{algorithmic}[1]
   \STATE {\bfseries Input:} Dataset $\mathcal{D}$, Policy $\pi_{\theta}$, Ref Model $\pi_{ref}$, Learning Rate $\alpha$, Group Size $G$, Debate Turns $T_{train}$
   \STATE {\bfseries Hyperparams:} KL coeff $\beta$, Epistemic Reward weight $\eta$
   \WHILE{not converged}
       \STATE Sample batch of questions $x \sim \mathcal{D}$
       \FOR{each $x$}
           \STATE \emph{// Collect Trajectories via Debate Rollout}
           \FOR{group rollout $g = 1$ to $G$}
               \STATE Initialize $c_{i,0} = \emptyset$
               \FOR{turn $t = 1$ to $T_{train}$}
                   \FOR{agent $i = 1$ to $N$}
                       \STATE Sample response $y_{i,t} \sim \pi_{\theta}(\cdot \mid x, c_{i,t-1})$
                       \STATE Compute token-level log-probs for $\hat{\mathcal{U}}_i$
                       \STATE Verify correctness $R_{corr}(y_{i,t}, y^*) \in \{0, 1\}$
                   \ENDFOR
                   \STATE Update Context $c_{t}$ with peer responses
               \ENDFOR
           \ENDFOR
           
           \STATE \emph{// Compute Rewards and Advantages}
           \FOR{each response $(y_{i,t})$ in batch}
               \STATE \emph{1. Calculate Epistemic Intrinsic Reward}
               \STATE Calculate Mean Peer Improvement $\Delta R_{\text{peer}}$ based on next turn correctness
               \STATE Total Reward $R_{total} = R_{corr} + \eta \cdot \Delta R_{\text{peer}}$ (if $t < T_{train}$)
               
               \STATE \emph{2. Compute Group Advantage}
               \STATE $A_{i,t} = \frac{R_{total} - \text{mean}(R_{group})}{\text{std}(R_{group}) + \epsilon}$
               
               \STATE \emph{3. Apply Aleatoric Uncertainty Weighting}
               \STATE $A^{au}_{i,t} = W(\hat{\mathcal{U}}_i) \cdot A_{i,t}$
           \ENDFOR
           
           \STATE \emph{// GRPO Update}
           \STATE $\mathcal{L} = \mathbb{E} \left[ \min(r A^{au}, \text{clip}(r) A^{au}) - \beta D_{KL}(\pi_\theta \| \pi_{ref}) \right]$
           \STATE Update $\theta \leftarrow \theta - \alpha \nabla_\theta \mathcal{L}$
       \ENDFOR
   \ENDWHILE
\end{algorithmic}
\end{algorithm}

\section{Experiment Details}
\label{app:experiment_details}

\subsection{Datasets}

\textbf{MATH} \cite{hendrycks2021measuring}: The MATH dataset contains 12,500 challenging competition mathematics problems including algebra,
geometry, combinatorics, and number theory. Each problem is provided with a full step-by-step solution with multi-step reasoning process.

\textbf{GSM8K} \cite{cobbe2021training}: GSM8K contains 8,500 grade-school-level word problems focusing on arithmetic and logical reasoning.

\textbf{AMC-23} \cite{amc2023}: This dataset contains 40 expert mathematical problems drawn from the 2023 American Mathematics Competitions, including spanning algebra, combinatorics, geometry, and number theory.

\textbf{AIME24 (25)} \cite{aops_aime_2024,aops_aime_2025}: Each dataset contains 30 competition-level problems from the American Invitational Mathematics Examination I and II tests.

\subsection{LLM Configurations and Hyperparameters}\label{app:umad_details}

We implement our Uncertainty-Guided Multi-Agent Debate (UMAD) training based on the Group Relative Policy Optimization (GRPO) framework \cite{shao2024deepseekmath}. The training is conducted using PyTorch with the \texttt{verl} \cite{sheng2025hybridflow} on $8 \times$ NVIDIA H800 GPUs. The detailed hyperparameters for generation, optimization, and the GRPO algorithm are listed in Table \ref{tab:hyperparameters}.

\begin{table}[h]
\centering
\caption{Detailed Hyperparameters for UMAD Training and Inference.}
\label{tab:hyperparameters}
\begin{small}
\begin{tabular}{l|c}
\toprule
\textbf{Hyperparameter} & \textbf{Value} \\
\midrule
\multicolumn{2}{c}{\textit{Generation Configuration}} \\
\midrule
Training Temperature & $0.8$ \\
Inference Temperature & $0.6$ \\
Top-p & $0.95$ \\
Max Response Length & $2,048$ \\
Max Prompt Length & $5,120$ \\
\midrule
\multicolumn{2}{c}{\textit{Optimizer Configuration}} \\
\midrule
Optimizer & Adam \\
Learning Rate & $1 \times 10^{-6}$ \\
Weight Decay & $0.01$ \\
Gradient Clipping Norm & $1.0$ \\
Clip Ratio & 0.2 \\
\midrule
\multicolumn{2}{c}{\textit{GRPO Training Configuration}} \\
\midrule
Training Epochs & $2$ \\
Train Batch Size & $256$ \\
Training Debate Rounds ($T_{train}$) & $2$ \\
Evaluation Debate Rounds ($T_{test}$) & $5$ \\
Group Size ($G$) & $5$ \\
KL Coefficient ($\beta$) & $0.001$ \\
Epistemic Influence Intrinsic Rewards ($r_{i,t}^{eu}$) & $0.25$ \\
Aleatoric Uncertainty Advantage Ratio ($\alpha_{au}$) & 0.25 \\
\bottomrule
\end{tabular}
\end{small}
\end{table}

\textbf{Reward Configuration.} 
For the correctness reward, we assign $+1$ for the correct final answer and $0$ otherwise. The Epistemic Influence intrinsic reward is controlled by $r_{i,t}^{eu}=0.25$ to balance the trade-off between self-correctness and peer-persuasion. The exact match is determined using the strict equivalence reward script provided by the MATH dataset benchmark \cite{hendrycks2021measuring}.

\textbf{Prompt Formatting.}
We adopt the standard chat template of math reasoning problems. The debate context from previous rounds is appended to the user message history to ensure the model attends to the peer's reasoning paths. The detailed prompt templates are demonstrated in Section \ref{app:prompts}.

\subsection{Debate Configurations}

In this paper, we set each agent's response length as 2,048 and prompt length as 5,120. Based on grid search of MAD with different length settings, we found that 2,048 is enough for handling most questions with debate contexts, the accuracy of general setting as 8,192 or 32,768 is marginally above the 2,048 setting. Considering the computational cost, we set 2,048 as default settings for both training and evaluation periods. For AIME problems, we extend the response length to 4,096 and prompt length as 9,216 to avoid truncated answers.

\begin{figure}[t]
    \centering
    \includegraphics[width=0.95\linewidth]{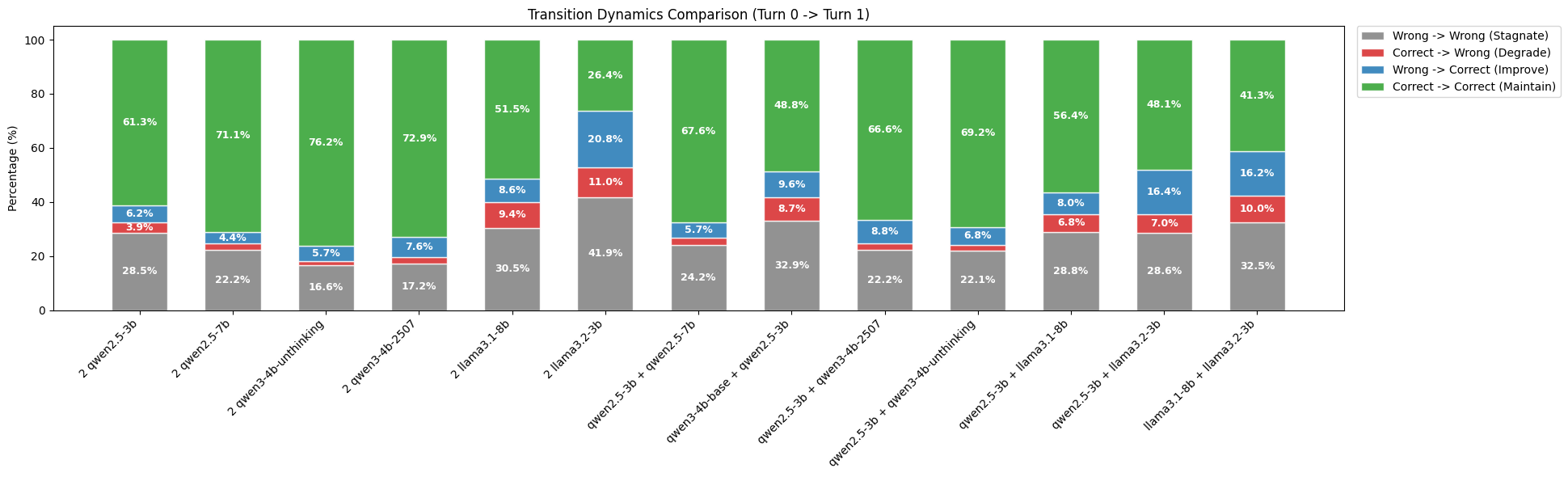}
    \caption{Homogeneous and Heterogeneous First Turn Responses Flipping Ratio.}
    \label{fig:all_flip}
\end{figure}

\subsection{Dynamics of Answer Flipping}\label{subsec:flip_dynamics}

To study how agents navigate increased uncertainty, we analyze answer changes (flipping) from the initial response to the first debate round. We partition outcomes into four transitions: \textit{Correct-to-Correct (C2C)}, \textit{Correct-to-Wrong (C2W)}, \textit{Wrong-to-Correct (W2C)}, and \textit{Wrong-to-Wrong (W2W)}. The \textit{Flip Ratio} is defined as the proportion of responses that change their binary correctness state between the two rounds (i.e., $C\to W$ and $W\to C$). This statistic is widely used in prior debate analyses to quantify stability and susceptibility to peer influence; our definition aligns with those conventions to ensure direct comparability.

\begin{figure*}[t]
    \centering

    \begin{subfigure}{0.31\linewidth}
        \centering
        \includegraphics[width=\linewidth]{fig/main_uncertainty/double_2507.png}
        \caption{Success Homogeneous MAD}
    \end{subfigure}
    \begin{subfigure}{0.31\linewidth}
        \centering
        \includegraphics[width=\linewidth]{fig/main_uncertainty/double_2.5-3.png}
        \caption{Neutral Homogeneous MAD}
    \end{subfigure}
    \begin{subfigure}{0.31\linewidth}
        \centering
        \includegraphics[width=\linewidth]{fig/main_uncertainty/double-llama8.png}
        \caption{Fail Homogeneous MAD}
    \end{subfigure}

    \begin{subfigure}{0.31\linewidth}
        \centering
        \includegraphics[width=\linewidth]{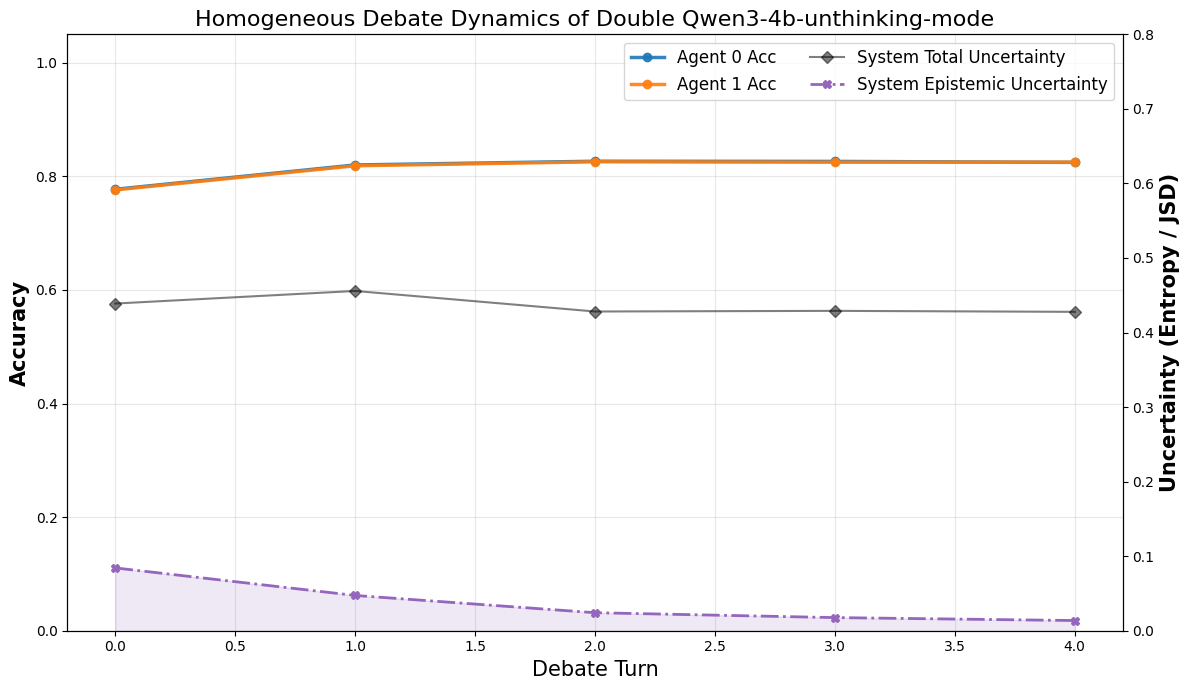}
        \caption{Success Homogeneous MAD}
    \end{subfigure}
    \begin{subfigure}{0.31\linewidth}
        \centering
        \includegraphics[width=\linewidth]{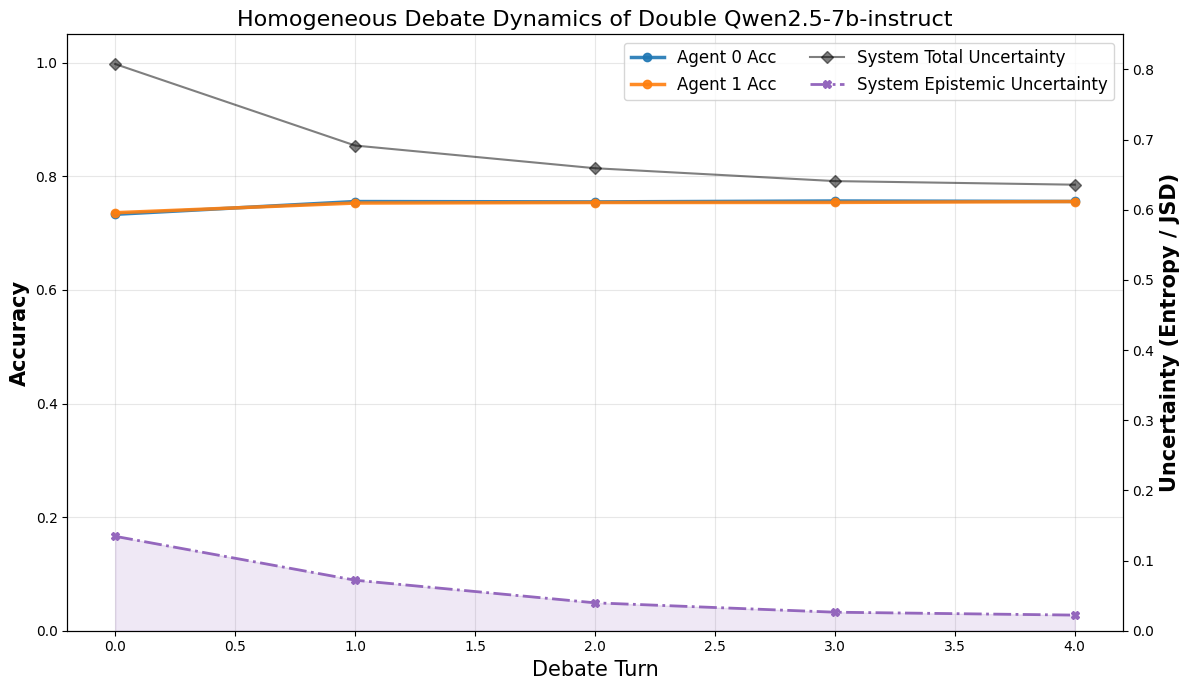}
        \caption{Neutral Homogeneous MAD}
    \end{subfigure}
    \begin{subfigure}{0.31\linewidth}
        \centering
        \includegraphics[width=\linewidth]{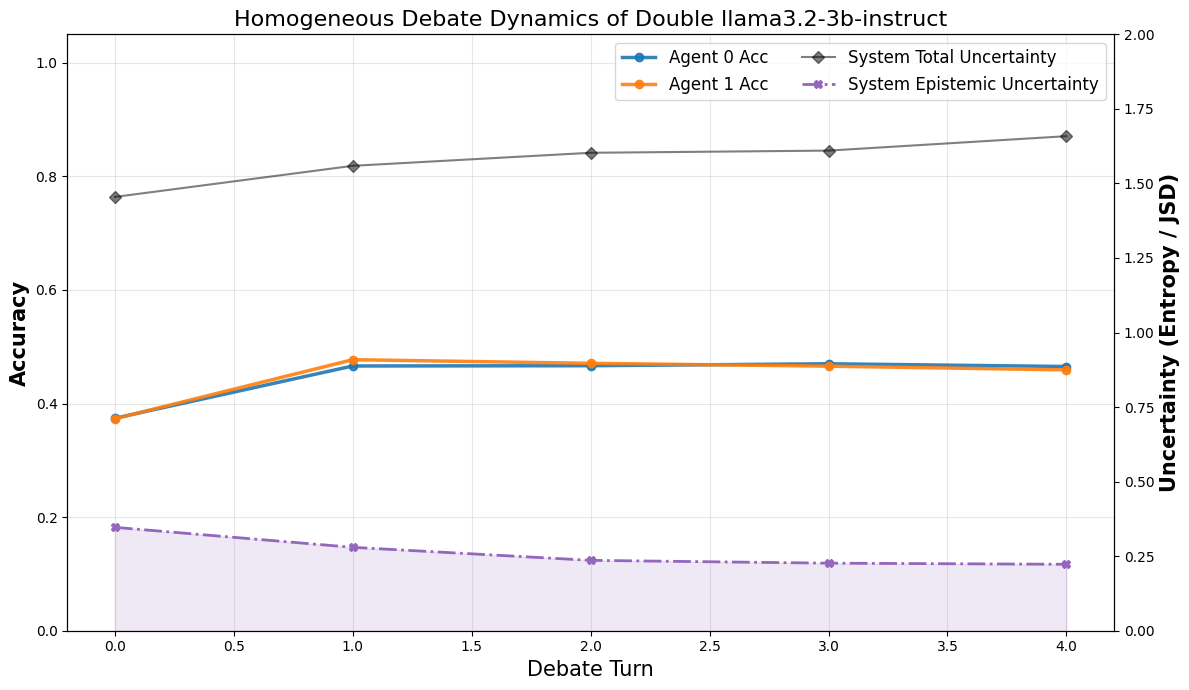}
        \caption{Success Homogeneous MAD}
    \end{subfigure}

    \begin{subfigure}{0.31\linewidth}
        \centering
        \includegraphics[width=\linewidth]{fig/main_uncertainty/hetero-2.5_2507.png}
        \caption{Success Heterogeneous MAD}
    \end{subfigure}
    \begin{subfigure}{0.31\linewidth}
        \centering
        \includegraphics[width=\linewidth]{fig/main_uncertainty/hetero-2.5_llama8.png}
        \caption{Neutral Heterogeneous MAD}
    \end{subfigure}
    \begin{subfigure}{0.31\linewidth}
        \centering
        \includegraphics[width=\linewidth]{fig/main_uncertainty/hetero-llama8-llama3.png}
        \caption{Fail Heterogeneous MAD}
    \end{subfigure}

    \begin{subfigure}{0.31\linewidth}
        \centering
        \includegraphics[width=\linewidth]{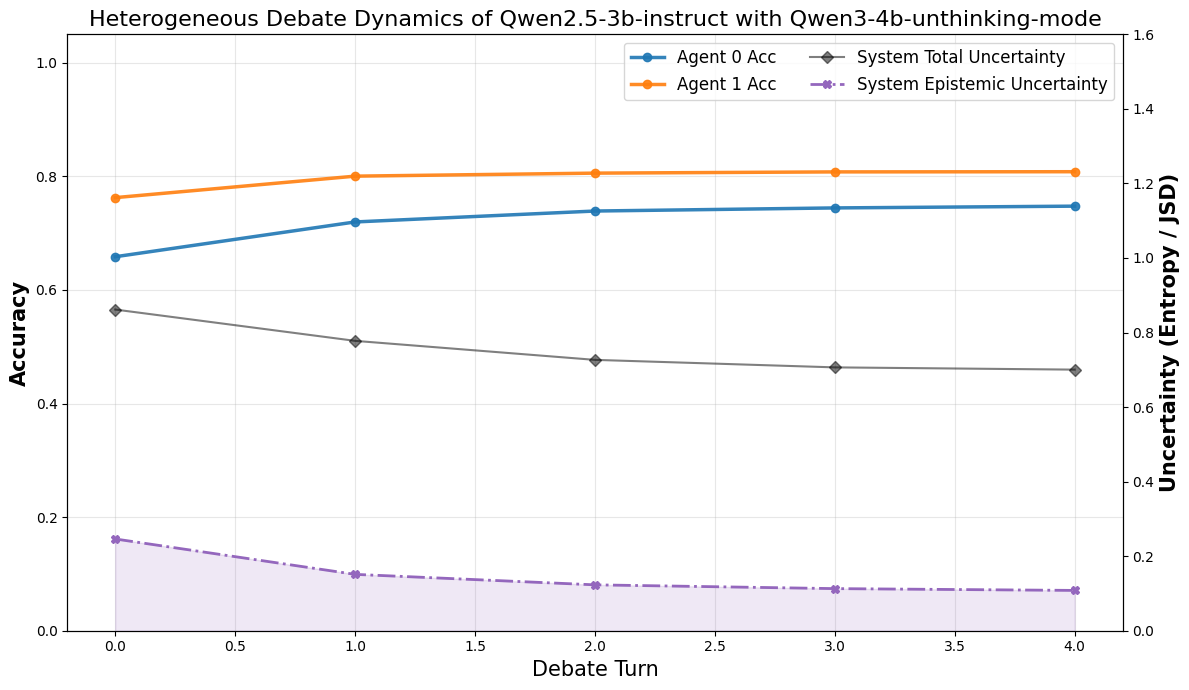}
        \caption{Success Heterogeneous MAD}
    \end{subfigure}
    \begin{subfigure}{0.31\linewidth}
        \centering
        \includegraphics[width=\linewidth]{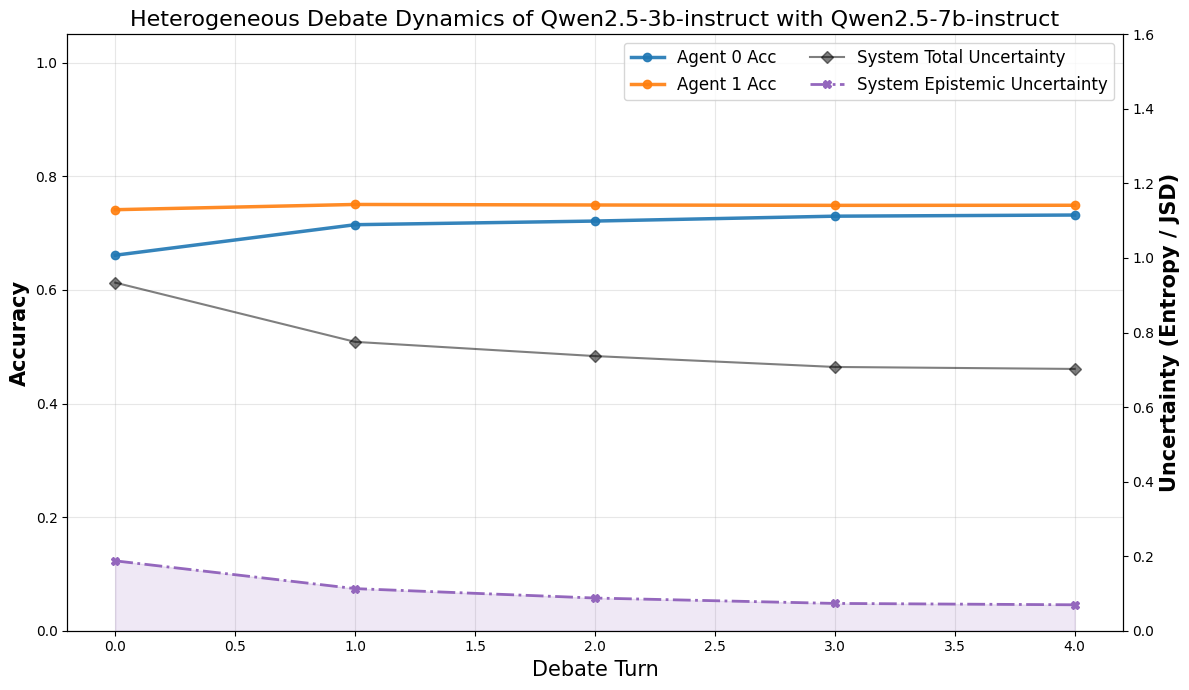}
        \caption{Success Heterogeneous MAD}
    \end{subfigure}
    \begin{subfigure}{0.31\linewidth}
        \centering
        \includegraphics[width=\linewidth]{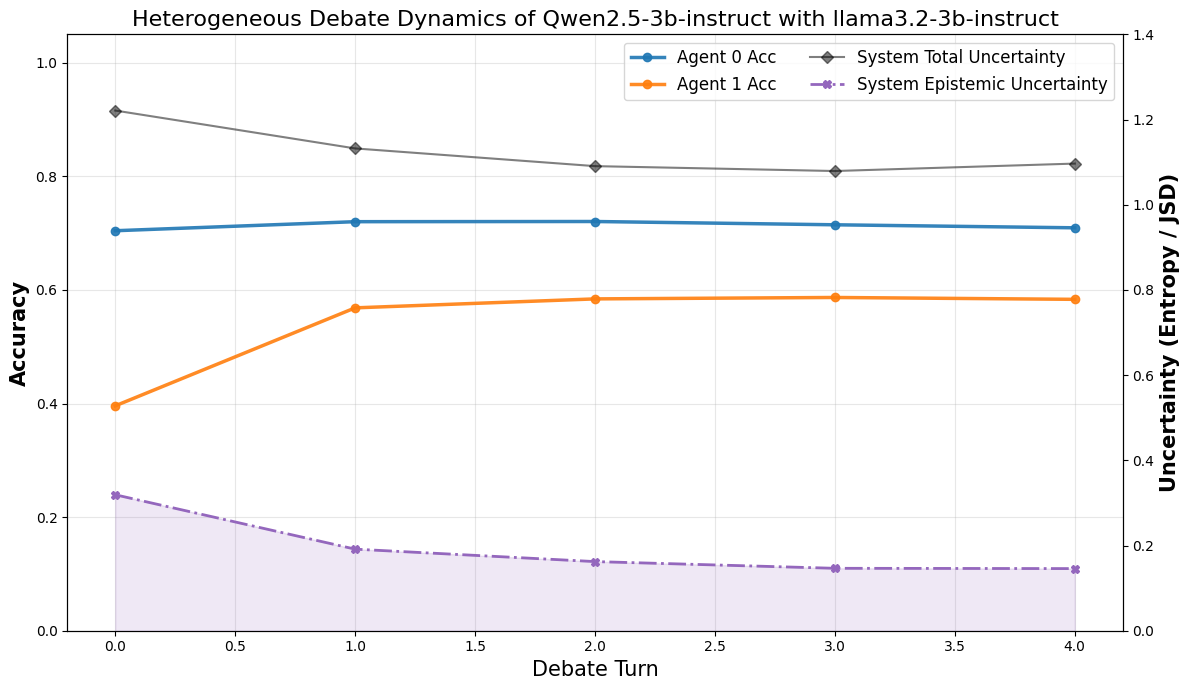}
        \caption{Success Heterogeneous MAD}
    \end{subfigure}

    \caption{Overview of uncertainty decomposition in MAD. The purple dashed area represents the system epistemic uncertainty and the gray line represents the total uncertainty. The blue and orange lines represent the accuracy on the test datasets.
    \textbf{Top two rows:} Homogeneous multi-agent debate.
    \textbf{Bottom two rows:} Heterogeneous multi-agent debate.}
    \label{fig:ecac-total}
\end{figure*}

Fig.~\ref{fig:all_flip} demonstrates first-round flip ratios for homogeneous and heterogeneous pairs. First, flip ratios are lower for stronger models and higher for weaker models, mirroring the stability differences reported in prior debate studies. Second, heterogeneous pairs exhibit intermediate flip ratios relative to their homogeneous baselines. These observations align with our uncertainty framework: higher flip ratios coincide with higher predictive uncertainty, while lower flip ratios reflect more concentrated beliefs.

\subsection{System-level Uncertainty Decomposition Results}\label{app:sys_uncertainty_decomposition}

To validate our framework, we conducted experiments across seven models from the Qwen2.5, Qwen3, and Llama families \citet{yang2024qwen2,yang2025qwen3,grattafiori2024llama}, forming diverse homogeneous and heterogeneous configurations. All debates were conducted on the MATH dataset with $N=2$ agents for $T=5$ rounds. Crucially, to bridge the gap between theoretical Bayesian beliefs and empirical measurement, we set the sampling temperature to 1.0 and generated $K=16$ sample paths for each agent at every turn. This allows us to explicitly compute the System Total Uncertainty and decompose it into Epistemic Uncertainty (JSD, representing inter-agent conflict) and Aleatoric Uncertainty (Mean Entropy, representing intra-agent instability).

\textbf{Analysis of Debate Trajectories.} As illustrated in Fig.~\ref{fig:ecac-total}, our empirical findings reveal that the outcome of a debate is governed by a delicate trade-off between epistemic resolution and aleatoric control:

\begin{itemize} \item \textbf{Universal Consensus (Epistemic Decay):} A universal trend across all Successful, Neutral, and Failure trajectories is the monotonic decrease of Epistemic Uncertainty (EU). This suggests that MAD fundamentally functions as a consensus-seeking process regardless of ground-truth correctness, continuously pruning conflicting hypotheses.

\item \textbf{ The "Aleatoric Barrier" to Success:} The divergence in final accuracy is primarily dictated by Aleatoric Uncertainty (AU). 
\begin{itemize}
    \item \textit{Success Mode:} Agents maintain low or decreasing AU, effectively anchoring valid reasoning against noise.
    \item \textit{Failure Mode:} Often observed in weaker models (e.g., Llama series), AU escalates uncontrollably. This indicates that the "contextual cost"—such as hallucination reinforcement—overwhelms the benefits of information exchange.
    \item \textit{Neutral Mode:} Represents an equilibrium where epistemic gains are exactly offset by high aleatoric noise, leading to stagnant accuracy.
\end{itemize}

\item \textbf{The Heterogeneous Advantage:} Heterogeneous pairs (e.g., Qwen-3B vs. Qwen-4B) exhibit significantly higher initial EU than homogeneous ones. This elevated "cognitive gap" offers greater potential for improvement. Optimal performance is achieved when this high epistemic potential is realized under the stability constraints provided by the stronger model, maximizing effective information gain.

\end{itemize}

\begin{table*}[t]
\centering
\caption{Three-seed mean and standard deviation for heterogeneous MAD evaluation. We report \texttt{pass@1} individual agent accuracy across $T=1,2,5$. Best values are highlighted in bold.}
\label{tab:heterogeneous_three_seed_meanstd}
\resizebox{0.98\textwidth}{!}{
\begin{tabular}{l|l|cc|cc|cc|cc|cc|cc}
\toprule
\multirow{2}{*}{\textbf{Method}} & \multirow{2}{*}{\textbf{Round}}
& \multicolumn{2}{c|}{\textbf{GSM8K}}
& \multicolumn{2}{c|}{\textbf{MATH500}}
& \multicolumn{2}{c|}{\textbf{AMC23}}
& \multicolumn{2}{c|}{\textbf{AIME24}}
& \multicolumn{2}{c|}{\textbf{AIME25}}
& \multicolumn{2}{c}{\textbf{Average}} \\
& & \textbf{A0} & \textbf{A1}
& \textbf{A0} & \textbf{A1}
& \textbf{A0} & \textbf{A1}
& \textbf{A0} & \textbf{A1}
& \textbf{A0} & \textbf{A1}
& \textbf{A0} & \textbf{A1} \\
\midrule
Full UMAD & $T=1$ & $83.2{\pm}0.9$ & $89.5{\pm}0.7$ & $63.0{\pm}0.7$ & $79.6{\pm}0.9$ & $38.3{\pm}5.2$ & $55.8{\pm}3.8$ & $3.3{\pm}3.3$ & $17.8{\pm}3.8$ & $0.0{\pm}0.0$ & $27.8{\pm}8.4$ & $37.6{\pm}0.6$ & $54.1{\pm}2.1$ \\
Full UMAD & $T=2$ & $90.0{\pm}0.7$ & $90.7{\pm}0.4$ & $83.3{\pm}0.3$ & $82.7{\pm}1.0$ & $65.8{\pm}6.3$ & $62.5{\pm}2.5$ & $20.0{\pm}8.8$ & $28.9{\pm}1.9$ & $26.7{\pm}8.8$ & $35.6{\pm}5.1$ & $57.2{\pm}3.6$ & $60.1{\pm}1.0$ \\
Full UMAD & $T=5$ & $\mathbf{91.1{\pm}0.6}$ & $91.1{\pm}0.2$ & $\mathbf{87.0{\pm}0.7}$ & $86.1{\pm}0.6$ & $\mathbf{80.0{\pm}4.3}$ & $71.7{\pm}3.8$ & $\mathbf{32.2{\pm}6.9}$ & $34.4{\pm}5.1$ & $\mathbf{42.2{\pm}5.1}$ & $\mathbf{45.6{\pm}7.7}$ & $\mathbf{66.5{\pm}2.8}$ & $65.8{\pm}3.1$ \\
\cmidrule(l{0pt}r{0pt}){1-14}
Intrinsic-only & $T=1$ & $83.4{\pm}0.1$ & $89.9{\pm}0.2$ & $65.3{\pm}1.2$ & $76.3{\pm}1.3$ & $39.2{\pm}5.2$ & $59.2{\pm}2.9$ & $1.1{\pm}1.9$ & $18.9{\pm}1.9$ & $3.3{\pm}0.0$ & $21.1{\pm}1.9$ & $38.5{\pm}0.8$ & $53.1{\pm}0.4$ \\
Intrinsic-only & $T=2$ & $86.9{\pm}0.6$ & $91.5{\pm}0.5$ & $75.8{\pm}1.1$ & $83.9{\pm}0.4$ & $57.5{\pm}7.5$ & $72.5{\pm}2.5$ & $14.4{\pm}1.9$ & $32.2{\pm}3.8$ & $10.0{\pm}3.3$ & $35.6{\pm}5.1$ & $48.9{\pm}1.7$ & $63.1{\pm}2.2$ \\
Intrinsic-only & $T=5$ & $88.1{\pm}0.5$ & $91.7{\pm}0.2$ & $78.5{\pm}1.4$ & $85.4{\pm}0.9$ & $63.3{\pm}9.5$ & $83.3{\pm}5.2$ & $22.2{\pm}6.9$ & $35.6{\pm}1.9$ & $15.6{\pm}6.9$ & $37.8{\pm}3.8$ & $53.5{\pm}4.2$ & $66.8{\pm}1.5$ \\
\cmidrule(l{0pt}r{0pt}){1-14}
NLL-only & $T=1$ & $82.8{\pm}1.1$ & $91.0{\pm}0.4$ & $62.1{\pm}0.8$ & $78.3{\pm}0.5$ & $38.3{\pm}8.8$ & $61.7{\pm}5.2$ & $3.3{\pm}3.3$ & $15.6{\pm}3.8$ & $0.0{\pm}0.0$ & $32.2{\pm}1.9$ & $37.3{\pm}1.8$ & $55.7{\pm}1.0$ \\
NLL-only & $T=2$ & $87.1{\pm}0.9$ & $91.0{\pm}0.7$ & $73.2{\pm}0.0$ & $81.9{\pm}1.2$ & $54.2{\pm}3.8$ & $64.2{\pm}6.3$ & $11.1{\pm}7.7$ & $32.2{\pm}5.1$ & $11.1{\pm}1.9$ & $38.9{\pm}10.2$ & $47.3{\pm}2.0$ & $61.6{\pm}1.6$ \\
NLL-only & $T=5$ & $89.4{\pm}1.1$ & $91.4{\pm}0.3$ & $79.6{\pm}0.5$ & $85.4{\pm}0.2$ & $65.8{\pm}1.4$ & $68.3{\pm}2.9$ & $24.4{\pm}11.7$ & $\mathbf{36.7{\pm}6.7}$ & $20.0{\pm}8.8$ & $42.2{\pm}5.1$ & $55.9{\pm}1.2$ & $64.8{\pm}2.2$ \\
\cmidrule(l{0pt}r{0pt}){1-14}
Single-GRPO Pair & $T=1$ & $84.7{\pm}0.4$ & $90.1{\pm}0.3$ & $63.9{\pm}0.1$ & $83.1{\pm}0.1$ & $38.3{\pm}7.2$ & $66.7{\pm}1.4$ & $3.3{\pm}3.3$ & $23.3{\pm}3.3$ & $1.1{\pm}1.9$ & $33.3{\pm}3.3$ & $38.3{\pm}2.3$ & $59.3{\pm}0.8$ \\
Single-GRPO Pair & $T=2$ & $88.1{\pm}0.4$ & $92.1{\pm}0.5$ & $74.2{\pm}2.5$ & $86.9{\pm}0.7$ & $56.7{\pm}2.9$ & $82.5{\pm}4.3$ & $14.4{\pm}6.9$ & $27.8{\pm}5.1$ & $15.6{\pm}1.9$ & $37.8{\pm}1.9$ & $49.8{\pm}1.0$ & $65.4{\pm}0.9$ \\
Single-GRPO Pair & $T=5$ & $90.4{\pm}0.8$ & $\mathbf{92.4{\pm}0.3}$ & $83.4{\pm}1.0$ & $\mathbf{89.5{\pm}1.0}$ & $75.8{\pm}3.8$ & $\mathbf{88.3{\pm}1.4}$ & $22.2{\pm}6.9$ & $30.0{\pm}3.3$ & $28.9{\pm}1.9$ & $38.9{\pm}1.9$ & $60.2{\pm}2.2$ & $\mathbf{67.8{\pm}0.5}$ \\
\bottomrule
\end{tabular}
}
\vspace{-8pt}
\end{table*}

\section{Supplementary Ablation Experiments}
\subsection{AU-only and EU-only}\label{app:supply_au_eu}
We provide additional ablation results in Table~\ref{tab:heterogeneous_three_seed_meanstd} to isolate the contribution of each uncertainty component in heterogeneous MAD. All results are reported as three-seed evaluation means and standard deviations using the same heterogeneous setting as the main experiments. Here, \textit{Full UMAD} uses both the epistemic/intrinsic term and the NLL-based aleatoric uncertainty proxy. \textit{Intrinsic-only} removes the AU proxy and keeps only the epistemic/intrinsic component, while \textit{NLL-only} keeps only the NLL-based AU proxy. We also include \textit{Single-GRPO Pair} as a non-decomposed pairwise GRPO baseline.
The results show that the two uncertainty terms play complementary roles. Full UMAD achieves the best average performance for the weaker agent A0 at $T=5$ ($66.5{\pm}2.8$), outperforming Intrinsic-only ($53.5{\pm}4.2$), NLL-only ($55.9{\pm}1.2$), and Single-GRPO Pair ($60.2{\pm}2.2$). This advantage is consistent across GSM8K, MATH500, AMC23, AIME24, and AIME25, suggesting that improving the weaker model in heterogeneous debate requires both informative peer signals and control of noisy contextual influence.
Intrinsic-only and NLL-only each capture part of the desired behavior but are insufficient alone. Intrinsic-only performs competitively for the stronger agent A1, reaching $66.8{\pm}1.5$ average at $T=5$, but substantially underperforms Full UMAD on A0. This indicates that encouraging epistemic interaction without controlling aleatoric cost may help the stronger model remain accurate, but does not reliably teach the weaker model to use peer information. Conversely, NLL-only provides a more stable AU-oriented objective and obtains the best A1 result on AIME24, but it lacks the explicit epistemic incentive needed to maximize useful information transfer to A0.
Single-GRPO Pair is strong for A1, achieving the best A1 average at $T=5$ ($67.8{\pm}0.5$), but it remains behind Full UMAD on A0. This pattern is consistent with the main results: standard pairwise optimization can preserve or improve the stronger agent, whereas uncertainty decomposition is more effective for lifting the weaker agent in asymmetric debate. Overall, the ablation supports the design of UMAD as a balanced objective: the epistemic term encourages useful disagreement and information exchange, while the AU proxy suppresses unstable or noisy debate dynamics.

\section{Example Prompts}\label{app:prompts}

In this section, we provide the prompts used in the multi-agent debate process, which are uniformly used for all training and evaluation datasets in math reasoning problems.

\subsection{Agent Prompt for Initial Turn 0}

In the first round, each agent is provided with the question texts and the standard step-by-step reasoning instructions. The final results are concluded with the format string \texttt{\textbackslash boxed\{\}}.

\begin{userchat}
\textbf{system} (Could be changed according to specific models' system instructions.) \\
You are Qwen, created by Alibaba Cloud. You are a helpful assistant.

\textbf{user} \\
\texttt{[Question Texts]} \\
Let's think step by step and output the final answer within \texttt{\textbackslash boxed\{\}}.
\end{userchat}

\subsection{Agent Prompt for Debate Turns}

In the following turns, each agent receives the responses of all agents in the last turn as additional reference solutions. All agents' solutions are provided anonymously without specific agent IDs or meaningful roles. This yields a standardized debate setting that avoids role-specific or privileged information and keeps the process grounded in the original task and reference solutions. Instruction prompts encourage agents to refine the reference solutions without imposing a fixed reflection style.

\begin{userchat}
\textbf{system} (Could be changed according to specific models' system instructions.) \\
You are Qwen, created by Alibaba Cloud. You are a helpful assistant.

\textbf{user} \\
Given the following problem: \texttt{[Question Texts]}. \\
We have two answers: \\
\textbf{agent 0} response is: \texttt{[Agent 0 reasoning and answer]}. \\

\textbf{agent 1} response is: \texttt{[Agent 1 reasoning and answer]}. \\

Please carefully review these answers and recognize which one is right. If one or all of them are right, please summarize the reasoning process of the right one and give the final answer. If both of them are wrong, please correct their mistakes and provide a novel and complete solution to the problem and give the final answer. Let's think step by step and output the final answer within \texttt{\textbackslash boxed\{\}}.
\end{userchat}

\section{Broader Impact}

This work advances the understanding of multi-agent interactions in Large Language Models (LLMs) by providing an operational Bayesian framework for uncertainty quantification. By shifting the focus from superficial prompt engineering to the fundamental regulation of internal belief dynamics, our Uncertainty-Guided MARL (UMAD) approach paves the way for more reliable and self-correcting reasoning systems. This has broader implications for deploying LLMs in high-stakes domains (e.g., scientific research, legal analysis) where hallucinations are notoriously risky and "wisdom of crowds" mechanisms are essential for verification.

\section{Computational Resources}

All experiments were conducted on a high-performance computing (HPC) cluster. The primary hardware configuration consisted of a single node equipped with $8 \times$ NVIDIA H800 (80GB) GPUs. The total computational expenditure is summarized as follows:

\begin{itemize}

\item \textbf{Inference \& Evaluation:} Multi-agent debate evaluation, incorporating aleatoric uncertainty estimation with a sampling scale of $K=16$, required approximately 40 GPU-hours.

\item \textbf{UMAD Training:} The Group Relative Policy Optimization (GRPO) training phase for models ranging from 3B to 7B parameters, with a rollout group size of $G=5$, consumed between 200 and 300 GPU-hours, depending on the specific model scale and context length.

\end{itemize}

\section{Declarations on Generative AI}

In the preparation of this manuscript, generative AI tools were utilized solely for language polishing, including improvements to grammar, clarity, and overall stylistic presentation. All models employed in our experiments are strictly compliant with their respective open-source licenses, and all training and evaluation data were sourced from publicly available datasets, ensuring reproducibility and adherence to ethical data usage standards.

\section{Limitations and Future Works}
Our UMAD framework relies on multi-round interactions among multiple agents. Although we show that training on short debates ($T_{\mathrm{train}}=2$) can generalize to longer inference-time debates ($T_{\mathrm{test}}=5$), the inference cost still scales approximately linearly with the number of agents and debate rounds, i.e., $N \cdot T$. Future work could explore adaptive stopping, selective communication, or belief-aware routing to improve the accuracy-efficiency trade-off. 

Our empirical evaluation focuses on mathematical reasoning, where final answers are verifiable and answer-level uncertainty can be measured reliably. Extending the framework to open-ended tasks, code generation, or scientific reasoning requires more robust answer equivalence checking and uncertainty estimation beyond exact-match verification.

Finally, our uncertainty measures are operational proxies rather than exact Bayesian quantities. We use ensemble answer distributions to approximate system-level predictive uncertainty and token-level negative log-likelihood (NLL) as a practical proxy for response instability. While computationally efficient, token-level NLL is not always aligned with semantic uncertainty, especially for poorly calibrated models or semantically equivalent solutions with different surface forms. Future work could incorporate semantic clustering of generated paths, calibrated confidence estimation, or variational inference to obtain more faithful uncertainty estimates.


\end{document}